\newtheorem{theorem}{Theorem}[section]
\newtheorem{corollary}{Corollary}[section]
\newtheorem{proposition}{Proposition}[section]
\newtheorem{lemma}{Lemma}[section]
\theoremstyle{remark}
\newtheorem*{remark}{Remark}
\theoremstyle{definition}
\newtheorem{definition}{Definition}[section]
\definecolor{brickred}{cmyk}{0,0.89,0.94,0.28}
\definecolor{goldenrod}{cmyk}{0,0.10,0.84,0}
\definecolor{purple}{cmyk}{0.45,0.86,0,0}
\definecolor{rawsienna}{cmyk}{0,0.72,1,0.45}
\definecolor{olivegreen}{cmyk}{0.64,0,0.95,0.40}
\definecolor{peach}{cmyk}{0,0.5,0.7,0}
\definecolor{darkolive}{rgb}{0.,0.4,0.}
\colorlet{grey}{gray!40}
\global\long\def\E{\mathbb{E}}
\global\long\def\d{\mathrm{d}}
\begin{document}
%
\title{Mean Estimation with User-Level Privacy for Spatio-Temporal IoT Datasets}



\author{\IEEEauthorblockN{
V. Arvind Rameshwar,
Anshoo Tandon,
Prajjwal Gupta,
Aditya Vikram Singh,\\
Novoneel Chakraborty,
and
Abhay Sharma
}
\thanks{V.~A.~R., A.~T., N.~C., and A.~S. are with the India Urban Data Exchange (IUDX) Program Unit, Indian Institute of Science, Bengaluru, India, emails: \texttt{arvind.rameshwar@gmail.com, anshoo.tandon@gmail.com,  novoneel.chakraborty@datakaveri.org, abhay.sharma@datakaveri.org}. P.~G. contributed to this work during an internship at IUDX; he is currently with Northeastern University, Boston, MA 02115, email: \texttt{gupta.praj@northeastern.edu}. A.~V.~S. is a PhD student at the Dept. of ECE, IISc, Bengaluru, email: \texttt{adityavs@iisc.ac.in}.}
}
 \IEEEoverridecommandlockouts


\maketitle

\begin{abstract}

This paper considers the problem of the private release of sample means of speed values from traffic datasets. Our key contribution is the development of  user-level differentially private algorithms that incorporate carefully chosen parameter values to ensure low estimation errors on real-world datasets, while ensuring privacy. We test our algorithms on ITMS (Intelligent Traffic Management System) data from an Indian city, where the speeds of different buses are drawn in a potentially non-i.i.d. manner from an unknown distribution, and where the number of speed samples contributed by different buses is potentially different. We then apply our algorithms to large synthetic datasets, generated based on the ITMS data. Here, we provide theoretical justification for the observed performance trends, and also provide recommendations for the choices of algorithm subroutines that result in low estimation errors. Finally, we characterize the best performance of pseudo-user creation-based algorithms on worst-case datasets via a minimax approach; this then gives rise to a novel procedure for the creation of pseudo-users, which optimizes the worst-case total estimation error. The algorithms discussed in the paper are readily applicable to general spatio-temporal IoT datasets for releasing a differentially private mean
of a desired value.


\end{abstract}


%
\IEEEpeerreviewmaketitle

\section{Introduction}

It is now well-understood that the release of even seemingly innocuous functions of a dataset that is not publicly available can result in the reconstruction of the identities of individuals (or users) in the dataset with alarming levels of accuracy (see, e.g., \cite{sweeney,narayanan}). A notable such reconstruction attack involved a somewhat na\"ively anonymized database of taxi data, released by the Taxi and Limousine Commission of New York City \cite{taxi}, which was succesfully deanonymized \cite{pandurangan}, thereby revealing sensitive information about the taxi drivers. To alleviate concerns over such attacks, the framework of differential privacy (DP) was introduced in \cite{dwork06}, which, informally speaking, guarantees the privacy of a \emph{single} data sample, or equivalently, of users when each user contributes at most one sample. However, most real-world datasets, such as traffic databases, record multiple contributions from every user; a straightforward application of standard DP techniques achieves poor estimation errors, owing to the addition of a large amount of noise to guarantee privacy. Recent work on ``user-level privacy'' \cite{userlevel} however demonstrates the effectiveness of some new algorithms that guarantee much improved estimation error due to the additional privacy requirement for (a fixed) $m>1$ samples per user.

In this paper, we provide algorithms, which draw on the research in \cite{userlevel}, for ensuring user-level privacy in the context of releasing the sample means of speed records in traffic datasets. Clearly, it is desirable to keep the speed values of vehicles private, because they indirectly reflect the individual driving behaviour and might affect vehicle insurance premiums. Our algorithms for estimating the sample means of the data crucially rely on carefully chosen procedures that first create \emph{pseudo-users}, or arrays, following \cite{tyagi}, and then clip the number of speed samples contributed by each user and clip each speed sample to lie in a high-probability interval. These procedures are designed with the objective of controlling the ``user-level sensitivity'' of the sample mean that we are interested in. 

We first emperically evaluate the performance of such algorithms (via their estimation errors) on real-world speed values from ITMS (Intelligent Traffic Management System) traffic data, supplied by IoT devices deployed in an Indian city. Here, the speeds of different buses are drawn in a potentially non-i.i.d. manner from an unknown distribution, and the number of speed samples contributed by different buses is potentially different. Next, we artificially generate a ``large'' synthetic dataset, using the statistics of the real-world ITMS data, with either a large number of users or a large number of samples contributed per user. We demonstrate, via extensive experiments, the effectiveness or the relative poor performance of the different algorithms we employ, in each case. In addition, we provide theoretical justification for the performance trends that we observe and recommendations for the choice of algorithm to be used on large real-world datasets. We mention that the results presented in this paper can be directly applied to Floating Car Data (FCD) (see, e.g., \cite{floating_car}) for estimating the traffic conditions in a particular location at a given time of the day.

Next, we consider the general setting of user-level differentially private mean estimation (with Laplace noise addition) using pseudo-user creation, where only the number of samples contributed by a user is clipped. For this setting, we present a ``minimax'' analysis of the total estimation error due to clipping and due to noise addition for privacy, and discuss some interesting consequences. In particular, as a by-product, we obtain an upper bound on the total error incurred by the ``best'' pseudo-user creation-based algorithm on \emph{any} dataset. This analysis then naturally gives rise to a novel procedure, based on the creation of pseudo-users, which clips the number of samples contributed by a user in such a manner as to jointly optimize the \emph{worst-case errors} due to clipping and noise addition. However, since the subroutine used in such a procedure involves a potentially large numerical optimization problem, we present and analytically solve a simpler \emph{convex} optimization problem that seeks to minimize a suitable upper bound on the worst-case error. 

The paper is organized as follows: Section \ref{sec:prelim} presents the problem formulation and relevant information on the ITMS dataset, and recapitulates preliminaries on DP and user-level DP; Section \ref{sec:alg} contains description of our main algorithms; Section \ref{sec:results} discusses the results obtained by running our algorithms on real-world ITMS and synthetic data, and provides theoretical proofs of the performance trends and recommendations for the choices of algorithms on large datasets. Section \ref{sec:minimax} presents a minimax analysis of the total estimation error incurred by a pseudo-user creation-based algorithm, and Section \ref{sec:opt-array-av} describes the novel algorithm for minimizing the worst-case total estimation error that results from the minimax analysis. The paper is then concluded in Section \ref{sec:conclusion}.



\section{Preliminaries}
\label{sec:prelim}
\subsection{Notation}
For a given $n\in \mathbb{N}$, the notation $[n]$ denotes the set $\{1,2,\ldots,n\}$. Given a collection of real numbers $(x_1,\ldots,x_n)$, we use the notation med$(x_1,\ldots,x_n)$ to refer to a median of the collection, i.e., med$(x_1,\ldots,x_n)$ is any value $x$ such that $|\{i\in [n]: x_i\geq x\}| = \left \lceil n/2\right \rceil$. We write $X\sim P$ to denote that the random variable $X$ is drawn from the distribution $P$. We use the notation $\text{Lap}(b)$ to refer to the zero-mean Laplace distribution with standard deviation $\sqrt{2}b$; its probability distribution function (p.d.f.) obeys
\[
f(x) = \frac{1}{2b}e^{-|x|/b}, \ x\in \mathbb{R}.
\]
We also use the notation $\mathcal{N}(\mu,\sigma^2)$ to denote the Gaussian distribution with mean $\mu$ and variance $\sigma^2$. For a random variable $X\sim \mathcal{N}(0,1)$, we denote its complementary cumulative distribution function (c.c.d.f.) by $Q$, i.e., for $x\in \mathbb{R}$,
\[
Q(x):=\Pr[X\geq x] = \int_{x}^{\infty} \frac{1}{\sqrt{2\pi}} e^{-z^2/2}\d z.
\]
\subsection{Problem Setup}
The ITMS dataset that we consider contains records of the data provided by IoT sensors deployed in an Indian city, pertaining to information on bus movement. Each record catalogues, among other information, the license plate of the bus, the location at which the data was recorded, a timestamp, and the actual data value itself, which is the speed of the bus. For the purpose of analysis, we divide the total area of the city into hexagon-shaped grids, with the aid of Uber's Hexagonal Hierarchical Spatial Indexing System (or H3) \cite{h3}, which provides an open-source library for such partitioning tasks. Furthermore, we quantize the timestamps present in the data records into 1 hour timeslots. In this work, we consider those data records that pertain to a single \underline{H}exagon \underline{A}nd \underline{T}imeslot (or HAT), and seek to release the average speeds of the buses in the chosen HAT, privately, to a client who has no prior knowledge of these values. We remark that the algorithms discussed in this paper are readily applicable to general spatio-temporal IoT datasets for releasing a differentially private mean of a desired value. 

\subsection{Problem Formulation}
Fix a HAT of interest. Let $L$ be the number of users (or distinct license plates) present in the HAT, and for every user $\ell\in [L]$, let the number of records contributed by the user be $m_\ell$. We set $m^\star:= \max_{\ell\in [L]} m_\ell$ and $m_\star:= \min_{\ell\in [L]} m_\ell$ as the maximum and minimum number of records contributed by any user, respectively. We assume that $L$ and the collection $\{m_\ell: \ell \in [L]\}$ are known to the client. Now, let the collection $\left\{S_j^{(\ell)}: \ell\in [L], j\in [m_\ell]\right\}$ denote the speed values present in the records corresponding to the chosen HAT. We assume that each $S_j^{(\ell)}$ is a real number that lies in the interval $[0,U]$, where $U$ is a fixed upper bound on the speeds of the buses; $U = 65$ km/hr for the datasets we consider. For the real-world ITMS dataset that we work with, the speed samples are drawn according to some unknown distribution $P$ that is potentially non-i.i.d. (independent and identically distributed) across samples and users. However, when we generate synthetic speed samples, we draw each speed value $S_j^{(\ell)}$ i.i.d. according to some distribution $P_s$ that is obtained by analyzing the statistics of the ITMS data.

Call the database consisting of the speed records of users present in the chosen HAT as $\mathcal{D} = \left\{\left(u_\ell,S_j^{(\ell)}\right): \ell \in [L], j\in [m_\ell]\right\}$, where the collection $\{u_\ell: \ell\in [L]\}$ denotes the set of users. 

The function that we are interested in computing, which is also called a ``query'' to the database, is the sample average
\begin{equation}
	\label{eq:f}
f(\mathcal{D}):= \frac{1}{\sum_{\ell=1}^L m_\ell}\cdot \sum_{\ell=1}^L \sum_{j=1}^{m_\ell} S_j^{(\ell)}.
\end{equation}

Next, we define the notion of privacy that we are interested in, that is user-level differential privacy.

\subsection{User-Level Differential Privacy}
Consider two databases $\mathcal{D}_1 = \left\{\left(u_{\ell},x_j^{(\ell)}\right): \ell \in [L], j\in [m_\ell]\right\}$ and $\mathcal{D}_2 = \left\{\left(u_\ell,\overline{x}_j^{(\ell)}\right): \ell \in [L], j\in [m_\ell]\right\}$ consisting of the same users, with each user contributing the same number of (potentially different) data values $\{x_j^{(\ell)}\}$. Let $\mathsf{D}$ denote a universal set of such databases. We say that $\mathcal{D}_1$ and $\mathcal{D}_2$ are ``user-level neighbours'' if there exists $\ell_0\in [L]$ such that $\left(x^{(\ell_0)}_{1},\ldots, x^{(\ell_0)}_{m_{\ell_0}}\right)\neq \left(\overline{x}^{(\ell_0)}_{1},\ldots, \overline{x}^{(\ell_0)}_{m_{\ell_0}}\right)$, with $\left(x^{(\ell)}_{1},\ldots, x^{(\ell)}_{m_{\ell}}\right)= \left(\overline{x}^{(\ell)}_{1},\ldots, \overline{x}^{(\ell)}_{m_{\ell}}\right)$, for all $\ell\neq \ell_0$. In this work, we concentrate on mechanisms that map a given database to a single real value.

\begin{definition}
	For a fixed $\varepsilon>0$, a mechanism $M: \mathsf{D}\to \mathbb{R}$ is said to be user-level $\varepsilon$-DP if for every pair of datasets $\mathcal{D}_1, \mathcal{D}_2$ that are user-level neighbours, and for every measurable subset $Y \subseteq \mathbb{R}$, we have that
	\[
	\Pr[M(\mathcal{D}_1) \in Y] \leq e^\varepsilon \Pr[M(\mathcal{D}_2) \in Y].
	\]
\end{definition}
Next, we define the user-level sensitivity of a function of interest.
\begin{definition}
	Given a function $g: \mathsf{D}\to \mathbb{R}$, we define its user-level sensitivity $\Delta_g$ as
	\[
	\Delta_g:= \max_{\mathcal{D}_1,\mathcal{D}_2\ \text{u-l nbrs.}} \left|g(\mathcal{D}_1) - g(\mathcal{D}_2)\right|,
	\]
	where the maximization is over datasets that are user-level neighbours.
\end{definition}
For example, the user-level sensitivity of $f$, in \eqref{eq:f}, is
\begin{equation}\Delta_f = \frac{Um^\star}{\sum_\ell m_\ell}, \label{eq:delf}\end{equation} where $m^\star = \max_\ell m_\ell$. In this paper, we use the terms ``sensitivity'' and ``user-level sensitivity'' interchangeably.

The next result is well-known and follows from standard DP results \cite[Prop. 1]{dwork06}:

\begin{theorem}
	\label{thm:dp}
	For any $g: \mathsf{D}\to \mathbb{R}$, the mechanism $M^{\text{Lap}}_g: \mathsf{D}\to \mathbb{R}$ defined by
	\[
	M^{\text{Lap}}_g(\mathcal{D}_1) = g(\mathcal{D}_1)+Z,
	\]
	where $Z\sim \text{Lap}(\Delta_g/\varepsilon)$ is user-level $\varepsilon$-DP.
\end{theorem}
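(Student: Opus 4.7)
The plan is to argue by the standard Laplace-mechanism calculation, adapted to the user-level neighbour notion defined just above. Fix any two user-level neighbours $\mathcal{D}_1,\mathcal{D}_2 \in \mathsf{D}$ and any measurable $Y\subseteq \mathbb{R}$. Since $M^{\text{Lap}}_g(\mathcal{D}_i)$ is $g(\mathcal{D}_i)$ plus independent Laplace noise with scale $b := \Delta_g/\varepsilon$, its density is a translate of the $\text{Lap}(b)$ density stated in the preliminaries. So I would write $\Pr[M^{\text{Lap}}_g(\mathcal{D}_i)\in Y]$ as an integral over $Y$ of $\frac{1}{2b} \exp(-|y-g(\mathcal{D}_i)|/b)$ and then compare the two integrands pointwise.

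The key step is the pointwise density ratio bound. For every $y\in \mathbb{R}$,
\[
\frac{\exp(-|y-g(\mathcal{D}_1)|/b)}{\exp(-|y-g(\mathcal{D}_2)|/b)} = \exp\!\left(\frac{|y-g(\mathcal{D}_2)|-|y-g(\mathcal{D}_1)|}{b}\right).
\]
By the reverse triangle inequality the exponent is at most $|g(\mathcal{D}_1)-g(\mathcal{D}_2)|/b$, and by the definition of user-level sensitivity together with the assumption that $\mathcal{D}_1,\mathcal{D}_2$ are user-level neighbours, this is at most $\Delta_g/b = \varepsilon$. Hence the density of $M^{\text{Lap}}_g(\mathcal{D}_1)$ is at most $e^\varepsilon$ times the density of $M^{\text{Lap}}_g(\mathcal{D}_2)$ at every $y$.

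Integrating the pointwise inequality over $Y$ yields $\Pr[M^{\text{Lap}}_g(\mathcal{D}_1)\in Y]\leq e^\varepsilon \Pr[M^{\text{Lap}}_g(\mathcal{D}_2)\in Y]$, which is precisely the user-level $\varepsilon$-DP condition from the definition preceding the theorem. Since $\mathcal{D}_1,\mathcal{D}_2$ were arbitrary user-level neighbours and $Y$ an arbitrary measurable set, the conclusion follows. The only point that requires care — and it is minor — is making sure the argument invokes the user-level sensitivity rather than the item-level one; this is automatic because the maximization in the definition of $\Delta_g$ is exactly over user-level neighbouring pairs, which is the same class of pairs the DP condition quantifies over. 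There is no real obstacle beyond this bookkeeping: the result is essentially an immediate specialization of \cite[Prop.~1]{dwork06} to the user-level neighbour relation.
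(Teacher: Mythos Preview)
Your argument is correct; it is exactly the standard Laplace-mechanism density-ratio calculation specialized to the user-level neighbour relation. The paper itself does not prove this theorem but simply states it as a well-known consequence of \cite[Prop.~1]{dwork06}, so your write-up is in fact more detailed than what the paper provides.
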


All the algorithms presented in this paper involve the addition of Laplace noise to guarantee user-level $\varepsilon$-DP, with the intuition that under some regularity conditions, for small $\varepsilon$ (or equivalently, high privacy requirements), Laplace distributed noise is asymptotically optimal in terms of the magnitude of error in estimation (see \cite{stair2,stair1}).

Furthermore, by standard results on the tail probabilities of Laplace random variables, we obtain the following bound on the estimation error:
\begin{proposition}
	For a given function $g:\mathsf{D}\to \mathbb{R}$ and for any dataset $\mathcal{D}_1$, we have that
	\[
	\Pr\left[\left|M^{\text{Lap}}_g(\mathcal{D}_1) - g(\mathcal{D}_1)\right|\geq  \frac{\Delta_g \ln(1/\delta)}{\varepsilon}\right]\leq \delta,
	\]
	for all $\delta\in [0,1]$.
\end{proposition}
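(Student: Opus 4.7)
The plan is to reduce the statement to a standard computation of the tail probability of a Laplace random variable, using the definition of $M^{\text{Lap}}_g$ given in Theorem \ref{thm:dp}.

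First, I would note that by the construction of $M^{\text{Lap}}_g$, the difference $M^{\text{Lap}}_g(\mathcal{D}_1) - g(\mathcal{D}_1)$ is exactly equal to $Z$, where $Z\sim \text{Lap}(\Delta_g/\varepsilon)$. Hence the probability in the statement equals $\Pr\bigl[|Z|\geq \Delta_g \ln(1/\delta)/\varepsilon\bigr]$, and the task reduces to upper bounding a Laplace tail.

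Next, I would compute the two-sided tail of the Laplace distribution directly from its p.d.f., as given in the Notation subsection. For $b = \Delta_g/\varepsilon$ and any $t\geq 0$, by symmetry of the density about zero,
\[
\Pr[|Z|\geq t] \;=\; 2\int_t^{\infty} \frac{1}{2b}\, e^{-x/b}\,\d x \;=\; e^{-t/b}.
\]
Substituting $t = \Delta_g \ln(1/\delta)/\varepsilon = b\ln(1/\delta)$ immediately yields $\Pr[|Z|\geq t] = e^{-\ln(1/\delta)} = \delta$, which gives the claimed bound (in fact, with equality for $\delta\in (0,1]$; the case $\delta = 0$ is vacuous since the threshold becomes infinite).

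There is no real obstacle here: the whole content is a one-line integration of the Laplace p.d.f., followed by a choice of threshold that cancels the logarithm. The only thing to be mildly careful about is handling the boundary cases $\delta = 0$ and $\delta = 1$ (for $\delta = 1$, the right-hand side is trivially at least the left-hand side since probabilities are at most $1$; for $\delta = 0$, the threshold is $+\infty$ and the event has probability zero).
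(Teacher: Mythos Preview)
Your proposal is correct and matches the paper's own treatment: the paper does not give an explicit proof but simply states that the bound follows from ``standard results on the tail probabilities of Laplace random variables,'' which is precisely the one-line integration you carry out. Your handling of the boundary cases $\delta = 0$ and $\delta = 1$ is also fine and more careful than what the paper provides.
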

%

Observe from Theorem \ref{thm:dp} that for a fixed $\varepsilon>0$, the standard deviation of the noise added is proportional to the product $Um^\star$; in settings where either $m^\star$ or $U$ is large, a prohibitively large amount of noise is added for privacy, thereby increasing the error in estimation.



%
%

\section{Algorithms}
\label{sec:alg}

In this section, we present 4 algorithms: \textsc{Baseline}, \textsc{Array-Averaging}, \textsc{Levy}, and \textsc{Quantile}, for private mean estimation with user-level privacy. These algorithms are designed keeping in mind their intended application, which is to analyze speed values of buses in large cities. Before we do so, we describe a useful approach that modifies the function $f$ to be estimated (see \eqref{eq:f}), in order to reduce the dependence of the standard deviation of the noise added on the number of samples per user (see the discussion at the end of the previous section): we clip the number of samples contributed by any user $\ell\in [L]$ to $\min\{m_\ell,m_\text{UB}\}$, where $m_\star\leq m_\text{UB}\leq m^\star$ could depend on $\{m_\ell\}_{\ell\geq 1}$. We describe next two such so-called ``grouping" strategies, which leads to the creation of ``pseudo-users'' or arrays, inspired by the work in \cite{tyagi}: \textsc{WrapAround} and \textsc{BestFit}.
\subsection{Strategies for Grouping Samples}
\label{sec:grouping}
Let $m_{\text{UB}}$ be given. The procedure underlying grouping strategies is as follows: for ease of exposition, we first sort the users in non-increasing order of the number of samples contributed, i.e., we reindex the users so that $m_1\geq m_2\geq\ldots\geq m_L$. We then initialize $L$ empty arrays $A_1,\ldots,A_L$, each of length $m_{\text{UB}}$. We assume that the locations of the arrays are indexed from 1 to $m_\text{UB}$, and for a given array $A$, we use the notation $w(A)$ to denote the number of filled locations. We initialilze $w(A_i) = 0$, for all $1\leq i\leq L$. Now, we process each user in turn, beginning with user 1 and populate his/her samples in the arrays, with a maximum of $m_\text{UB}$ samples from any user being populated in the arrays. The exact strategy for populating arrays followed in the \textsc{WrapAround} and \textsc{BestFit} procedures is explained below.
\subsubsection{\textsc{WrapAround}}
Let $\ell^*$ denote the smallest value of $\ell\in [L]$ such that $m_\ell < m_\text{UB}$. For every $\ell<\ell^*$, we populate the array $A_\ell$ with the speed samples $\left(S_j^{(\ell)}: 1\leq j\leq m_\text{UB}\right)$, in a contiguous manner. Next, we pick the samples $\left(S_j^{(\ell^\star)}: 1\leq j\leq m_{\ell^\star}\right)$ from user $\ell^\star$ and add these to array $A_{\ell^\star}$ contiguously. By the definition of $\ell^\star$, there exist some empty slots in array $A_{\ell^\star}$. These are filled completely by the next user's samples $\left(S_j^{(\ell^\star+1)}: 1\leq j\leq m_{\ell^\star+1}\right)$, if $m_{\ell^\star+1}\leq m_\text{UB}-m_{\ell^\star}$ contiguously; else, these slots are filled by the samples $\left(S_j^{(\ell^\star+1)}: 1\leq j\leq m_\text{UB}-m_{\ell^\star}\right)$, and the remaining samples from user $\ell^\star+1$ are filled in the first $m_{\ell^\star+1}+m_{\ell^\star} - m_{\text{UB}}$ positions of array $A_{\ell^\star+1}$. The procedure then continues, filling each array in turn, until all the user's samples have been populated in the arrays. Let $K$ denote the index of the last filled-up array, with the arrays $A_{K+1},\ldots,A_L$ being deleted. Note that
\begin{equation}
	\label{eq:K}
K = \left \lfloor \frac{\sum_{\ell=1}^{L}\min \left \{m_{\ell}, m_\text{UB} \right \}}{m_\text{UB}} \right \rfloor.
\end{equation}
Observe that in this strategy, it can happen that the samples from a given user $\ell\in [L]$ are split between two arrays $A_i$ and $A_{i+1}$, for some $1\leq i\leq K-1$; we then say that this user $\ell$ ``influences'' two arrays. The \textsc{WrapAround} strategy is shown as Algorithm \ref{alg:wrap}.


\subsubsection{\textsc{BestFit}}

This algorithm that we use is a version of a popular online algorithm for the bin-packing problem, which is a known strongly NP-complete problem (see \cite{np}). Similar to the description of \textsc{WrapAround}, let $\ell^*$ denote the smallest value of $\ell\in [L]$ such that $m_\ell < m_\text{UB}$. For every $\ell<\ell^*$, we populate the array $A_\ell$ with the speed samples $\left(S_j^{(\ell)}: 1\leq j\leq m_\text{UB}\right)$, in a contiguous manner. For each user $\ell\geq \ell^\star$, we find the least-indexed array $A$ among $A_{\ell^\star},\ldots,A_L$ that can accommodate $m_\ell$ samples and is filled the most. The $m_\ell$ samples from the user $\ell$ are then placed in array $A$ (in a continguous manner), and the process is iterated over the other users. Note that the number of non-empty arrays that are returned by the \textsc{BestFit} algorithm $\overline{K}$, is at least $K$ (see \eqref{eq:K}). Furthermore, observe that the samples from any given user $\ell\in [L]$ are present in exactly one array, unlike the case in the \textsc{WrapAround} strategy. Hence, any user influences at most one array, in this case.

The \textsc{BestFit} strategy is shown as Algorithm \ref{alg:bestfit}.



\begin{algorithm}
	\caption{The wrap-around strategy}
	\begin{algorithmic}[1]
		\Procedure {\textsc{WrapAround}}{$\mathcal{D}, m_\text{UB}$}
		
		\State Set $K \gets \left \lfloor \frac{\sum_{\ell=1}^{L}\min \left \{m_{\ell}, m_\text{UB} \right \}}{m_\text{UB}} \right \rfloor$.
		\State Initialize $t\gets 1$
		\For {$\ell= 1:L$}
		\State Set $r \gets \min\left \{m_{\ell}, m_\text{UB} \right \}$, $w \gets w(A_t)$.
		\If {$m_\text{UB}-w\geq r$}
		\State Fill $A_t(w+1:w+r) = \left(S_1^{(\ell)},\ldots, S_r^{(\ell)}\right)$
		\Else
		\State Fill $A_t(w+1:m_\text{UB}) = \left(S_1^{(\ell)},\ldots, S_{m_\text{UB}-w}^{(\ell)}\right)$.
		\State Update $t\gets t+1$, set $r' = r+w-m_\text{UB}$.
		\State Fill $A_t(1:r') = \left(S_{m_\text{UB}-w+1}^{(\ell)},\ldots, S_r^{(\ell)}\right)$.
		\EndIf
		\EndFor
		\State \Return $A_{1}, \ldots, A_{K}$
		\EndProcedure
	\end{algorithmic}
	\label{alg:wrap}
\end{algorithm}

\begin{algorithm}
	\caption{The best-fit strategy}
	\begin{algorithmic}[1]
		\Procedure{BestFit}{$\mathcal{D}, m_\text{UB}$}
		\For {$\ell \leftarrow 1:L$}
		\State Set $r \gets \min\left \{m_{\ell}, m_\text{UB}\right\}$
		\State Set $\mathcal{A} = \{A_i: m_\text{UB}-w(A_i)\geq r\}$.
		\State Set $A = A_j\in \mathcal{A}$ such that $j$ is the least index such that $w(A_j) = \max_{A'\in \mathcal{A}} w(A')$.
		\State Fill $A(w(A)+1:r+w(A)) = (S_1^{(\ell)},\ldots, S_r^{(\ell)})$.
		\EndFor
		\State \Return non-empty arrays $A_{1}, \ldots,A_{\overline{K}}$
		\EndProcedure
	\end{algorithmic}
\label{alg:bestfit}
\end{algorithm}

With these grouping strategies in place, we shall now describe our main algorithms.

\subsection{\textsc{Baseline}}
\label{sec:baseline}
The \textsc{Baseline} algorithm simply adds the required amount of Laplace noise to guarantee $\varepsilon$-DP, to the function $f$ to be computed, i.e.,
\[
M_\text{Baseline}(\mathcal{D}) = f(\mathcal{D})+\text{Lap}(\Delta_f/\varepsilon),
\]
where $\Delta_f = \frac{Um^\star}{\sum_\ell m_\ell}$ is the sensitivity of the function $f$. As mentioned earlier, this algorithm suffers from the drawback that a large amount of noise needs to be added for privacy when either $U$ or $m^\star$ is large, thereby impeding estimation. All the algorithms presented next attempt to ameliorate this by the creation of pseudo-users, or arrays.

\subsection{\textsc{Array-Averaging}}

In this algorithm, we attempt to reduce the amount of noise added when $m^\star$ is large, by clipping the number of samples contributed by any given user to some $m_\star\leq m_\text{UB}\leq m^\star$. We use one of the two grouping strategies discussed previously: \textsc{WrapAround} or \textsc{BestFit} for clipping. In particular, given the database $\mathcal{D}$, we set

\[
f_\text{arr, wrap}(\mathcal{D}):=\frac{1}{K}\cdot \sum_{i=1}^K \overline{A_i},
\]
where $A_i$ are the arrays obtained using the \textsc{WrapAround} strategy, with $\overline{A_i}:= \frac{1}{w(A_i)} \sum_{j=1}^{w(A_i)} A_i(j)$ being the mean of the samples contributed by array $i\in [K]$. We also set
\[
 f_\text{arr, best}(\mathcal{D}):=\frac{1}{\overline{K}}\cdot \sum_{i=1}^{\overline{K}} \overline{A_i},
\]
 where $A_i$ are the arrays obtained using the \textsc{BestFit} strategy, with $\overline{A_i}:= \frac{1}{w(A_i)} \sum_{j=1}^{w(A_i)} A_i(j)$ being the mean of the samples contributed by array $i\in [\overline{K}]$.
 
 From the discussion in Section \ref{sec:grouping}, we have that since using the \textsc{WrapAround} strategy can result in one user influencing two arrays, the sensitivity is given by
 \begin{equation*}
 \Delta_{f_\text{arr, wrap}} = \frac{2U}{K}.
 \label{eq:delfwrap}
 \end{equation*}
However, since using the \textsc{BestFit} strategy, any user influences at most array, the sensitivity is 
 \begin{equation}
	\Delta_{f_\text{arr, best}} = \frac{U}{\overline{K}}.
	\label{eq:delfbest}
\end{equation}

The \textsc{Array-Averaging} algorithm with \textsc{WrapAround} grouping computes
\[
M_\text{ArrayAvg, wrap}(\mathcal{D}) = f_\text{arr, wrap}(\mathcal{D})+\text{Lap}(\Delta_{f_\text{arr, wrap}}/\varepsilon),
\]
and the \textsc{Array-Averaging} algorithm with \textsc{BestFit} grouping computes
\[
M_\text{ArrayAvg, best}(\mathcal{D}) = f_\text{arr, best}(\mathcal{D})+\text{Lap}(\Delta_{f_\text{arr, best}}/\varepsilon).
\]
Clearly, both these algorithms are $\varepsilon$-DP, from Theorem \ref{thm:dp}.

Observe that since $\overline{K}\geq K$, we have that $\Delta_{f_\text{arr, best}}\leq \frac{U}{K}< \Delta_{f_\text{arr, wrap}}$. This suggests that using the \textsc{BestFit} grouping strategy yields lower estimation error, since the corresponding standard deviation of the noise added is lower by a multiplicative factor of $2$ compared to the \textsc{WrapAround} strategy. In what follows, we therefore focus our attention on the \textsc{BestFit} strategy.

We now define
\[
\tilde{\Delta}_{f_\text{arr}}:= \frac{U \, m_\text{UB}}{\sum_{\ell=1}^L \min\{m_\ell, m_\text{UB}\}}
\]
as a proxy for the upper bound $U/K$ on $\Delta_{f_\text{arr, best}}$. If $\tilde{\Delta}_{f_\text{arr}}$ is small, then so are $U/K$ and $\Delta_{f_\text{arr, best}}$.

The following lemma, that compares the \textsc{Baseline} and \textsc{Array-Averaging} algorithms, then holds:

\begin{lemma}
	For any $m_\text{UB}\leq m^\star$, we have that $\Delta_f \geq \tilde{\Delta}_{f_\text{arr}}$.
\end{lemma}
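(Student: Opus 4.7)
The plan is to rewrite the claim as a simple algebraic inequality and verify it term by term. Substituting the definitions of $\Delta_f$ from \eqref{eq:delf} and of $\tilde{\Delta}_{f_\text{arr}}$, the assertion $\Delta_f \geq \tilde{\Delta}_{f_\text{arr}}$ becomes
\[
\frac{U m^\star}{\sum_{\ell=1}^L m_\ell} \geq \frac{U m_\text{UB}}{\sum_{\ell=1}^L \min\{m_\ell, m_\text{UB}\}}.
\]
The common factor $U>0$ cancels, and both denominators are strictly positive (provided at least one user contributes a sample, which we may assume), so cross-multiplying preserves the inequality direction. Hence it suffices to establish
\[
m^\star \sum_{\ell=1}^L \min\{m_\ell, m_\text{UB}\} \;\geq\; m_\text{UB} \sum_{\ell=1}^L m_\ell.
\]

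I would prove this summed inequality by demonstrating the stronger pointwise bound
\[
m^\star \cdot \min\{m_\ell, m_\text{UB}\} \;\geq\; m_\text{UB} \cdot m_\ell
\]
for every $\ell \in [L]$, and then summing over $\ell$. To verify the pointwise bound, I would split into two cases. If $m_\ell \leq m_\text{UB}$, the minimum equals $m_\ell$, and after cancelling the common factor $m_\ell$ the claim reduces to $m^\star \geq m_\text{UB}$, which is exactly the hypothesis. If $m_\ell > m_\text{UB}$, the minimum equals $m_\text{UB}$, and after cancelling $m_\text{UB}$ the claim reduces to $m^\star \geq m_\ell$, which holds by the definition $m^\star = \max_\ell m_\ell$.

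I do not anticipate any significant obstacle: the argument is essentially a cross-multiplication followed by a two-case analysis, and no probabilistic or analytic tool is required. The only point to check carefully is that all quantities involved are nonnegative and that the denominators are positive, which ensures that the cross-multiplication step is valid and the pointwise bound implies the summed bound.
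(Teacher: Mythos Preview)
Your proof is correct. The pointwise bound $m^\star \min\{m_\ell, m_\text{UB}\} \geq m_\text{UB}\, m_\ell$ holds in both cases exactly as you argue, and summing over $\ell$ yields the desired inequality after the (valid) cross-multiplication.

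Your route differs from the paper's. The paper rewrites the target as $\frac{m^\star}{m_\text{UB}} \geq \frac{\sum_\ell m_\ell}{\sum_\ell \min\{m_\ell,m_\text{UB}\}}$, splits the index set into $B=\{\ell: m_\ell<m_\text{UB}\}$ and $B^c$, and uses a mediant-type step: since the $B$-contribution to numerator and denominator is identical while the $B^c$-ratio exceeds $1$, one may drop the $B$-terms to obtain $\frac{\sum_{\ell\in B^c} m_\ell}{|B^c|\,m_\text{UB}}\leq \frac{m^\star}{m_\text{UB}}$. Your argument bypasses this ratio manipulation entirely by proving the stronger termwise inequality, which makes the proof slightly more elementary and self-contained. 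The paper's approach, on the other hand, makes the structure of the ``gain'' ratio $\Delta_f/\tilde{\Delta}_{f_\text{arr}}$ more visible, which feeds into the subsequent discussion of $\mathsf{OPT}$. One cosmetic point: when you write ``after cancelling the common factor $m_\ell$'' in Case~1, this tacitly assumes $m_\ell>0$; the case $m_\ell=0$ is trivial ($0\geq 0$), so you may simply note it separately rather than divide by it.
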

\begin{proof}
	We need only prove that $\frac{m^\star}{\sum_\ell m_\ell}\geq \frac{m_\text{UB}}{\sum_{\ell=1}^L \min\{m_\ell, m_\text{UB}\}}$, or equivalently, that $\frac{m^\star}{m_\text{UB}}\geq \frac{\sum_\ell m_\ell}{\sum_{\ell=1}^L \min\{m_\ell, m_\text{UB}\}}$. To see this, let $B:= \{\ell\in [L]: m_\ell< m_\text{UB}\}$ and $B^c:= [L]\setminus B$. Then,
	\begin{align*}
		\frac{\sum_\ell m_\ell}{\sum_{\ell=1}^L \min\{m_\ell, m_\text{UB}\}}&= \frac{\sum_{\ell\in B} m_\ell + \sum_{\ell\in B^c} m_\ell }{\sum_{\ell\in B} m_\ell + \sum_{\ell\in B^c} m_\text{UB}}\\
		&\leq \frac{\sum_{\ell\in B^c} m_\ell }{|B^c|\cdot m_\text{UB}}\\
		&\leq \frac{m^\star}{m_\text{UB}},
	\end{align*}
where the first inequality holds since $\sum_{\ell\in B^c} m_\ell \ge \sum_{\ell\in B^c} m_\text{UB}$, and the second inequality holds since $m_\ell\leq m^\star$, for all $\ell\in [L]$.
\end{proof}
%
Next, we shall embark on choosing a ``good'' $m_\text{UB}$, which provides a large ``gain" $\frac{\Delta_f}{\tilde{\Delta}_{f_\text{arr}}}$; this therefore guarantees that $\Delta_{f_\text{arr, best}}$ is small.

To this end, call $\alpha:= \frac{m^\star}{m_\text{UB}}$; in our setting, we have $\alpha\geq 1$. For fixed $\{m_\ell\}_\ell\geq 1$, observe that
\begin{align}
	\frac{\Delta_f}{\tilde{\Delta}_{f_\text{arr}}}&= \frac{\alpha}{\sum_\ell m_\ell}\cdot \sum_\ell \min\{m_\ell, m_\text{UB}\} \notag\\
	&=  \frac{\alpha}{\sum_\ell m_\ell}\cdot \sum_\ell \min\{m_\ell, {m^\star}/{\alpha}\} \notag\\
	&\leq  \frac{\alpha}{\sum_\ell m_\ell}\cdot \min\left\{ \sum_\ell m_\ell, m^\star L/\alpha\right\} = \min\left\{ \alpha, \frac{m^\star L}{\sum_\ell m_\ell}\right\}. \label{eq:sens}
\end{align}
In the above, the equality $\frac{\Delta_f}{\tilde{\Delta}_{f_\text{arr}}} = \alpha$ holds only if $m_\text{UB}\geq m_\ell$, for all $\ell \in [L]$, or equivalently, $m_\text{UB} = m^\star$. Note that in this case $\alpha = \frac{\Delta_f}{\tilde{\Delta}_{f_\text{arr}}}$ in fact equals $1$. On the other hand, the equality  $\frac{\Delta_f}{\tilde{\Delta}_{f_\text{arr}}} =\frac{m^\star L}{\sum_\ell m_\ell}\geq 1$ holds only if $m_\text{UB}\leq m_\ell$, for all $\ell\in [L]$, or equivalently, $m_\text{UB} = m_\star$. We designate $\mathsf{OPT}:= \frac{m^\star L}{\sum_\ell m_\ell}$, and note that $\Delta_f = \mathsf{OPT} \, \tilde{\Delta}_{f_\text{arr}}$ when $m_\text{UB} = m_\star$.

While the choice $m_\text{UB} = m_\star$ results in a high gain, it could potentially cause poor accuracy in estimation of the true value $f(\mathcal{D})$, due to the clipping error incurred by dropping a relatively large number of user samples. More precisely, since each array contains only very few samples,  $f_\text{arr, best}(\mathcal{D})$ could be quite different from $f(\mathcal{D})$. The next lemma shows that choosing the more favourable $m_\text{UB} = \text{med} (m_1,\ldots,m_L)$ results in a gain that is only at most a factor of 2 lower than $\mathsf{OPT}$.

\begin{lemma}
	The choice $m_\text{UB} = \text{med} (m_1,\ldots,m_L)$ results in
	\[
	\frac{\Delta_f}{\tilde{\Delta}_{f_\text{arr}}} \geq \frac{\mathsf{OPT}}{2}.
	\]
\end{lemma}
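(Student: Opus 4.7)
The plan is to substitute the prescribed choice of $m_\text{UB}$ into the expression for $\frac{\Delta_f}{\tilde{\Delta}_{f_\text{arr}}}$ that was already derived in the excerpt (the first line of the chain leading to \eqref{eq:sens}), and then lower-bound the sum using the defining property of the median.

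First, from \eqref{eq:sens}, without applying the $\min$ upper bound, we have the exact identity
\[
\frac{\Delta_f}{\tilde{\Delta}_{f_\text{arr}}} \;=\; \frac{\alpha}{\sum_\ell m_\ell}\cdot \sum_\ell \min\{m_\ell, m_\text{UB}\},
\]
where $\alpha = m^\star/m_\text{UB}$. Setting $m_\text{UB} = \text{med}(m_1,\ldots,m_L)$, I would next invoke the definition of the median from the Notation section: this guarantees $\lvert \{\ell\in [L] : m_\ell \geq m_\text{UB}\}\rvert = \lceil L/2 \rceil \geq L/2$. For every such index $\ell$, the term $\min\{m_\ell, m_\text{UB}\}$ equals $m_\text{UB}$, so
\[
\sum_{\ell=1}^L \min\{m_\ell, m_\text{UB}\} \;\geq\; \lceil L/2 \rceil \cdot m_\text{UB} \;\geq\; \frac{L \cdot m_\text{UB}}{2}.
\]

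Plugging this lower bound back in and cancelling $m_\text{UB}$ against $\alpha = m^\star/m_\text{UB}$ yields
\[
\frac{\Delta_f}{\tilde{\Delta}_{f_\text{arr}}} \;\geq\; \frac{m^\star/m_\text{UB}}{\sum_\ell m_\ell}\cdot \frac{L \cdot m_\text{UB}}{2} \;=\; \frac{m^\star L}{2\sum_\ell m_\ell} \;=\; \frac{\mathsf{OPT}}{2},
\]
which is the claimed bound.

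There is essentially no hard step: the only thing to be careful about is interpreting the paper's (slightly nonstandard) definition of the median correctly, namely that at least half of the $m_\ell$ satisfy $m_\ell \geq m_\text{UB}$, so that the $\min$ on these indices trivially equals $m_\text{UB}$. The rest is algebraic rearrangement using the identity already established in \eqref{eq:sens}.
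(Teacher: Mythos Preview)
Your proof is correct and essentially identical to the paper's own argument: both start from the exact identity $\frac{\Delta_f}{\tilde{\Delta}_{f_\text{arr}}} = \frac{\alpha}{\sum_\ell m_\ell}\sum_\ell \min\{m_\ell,m_\text{UB}\}$, use the median definition to conclude that $\lceil L/2\rceil$ indices contribute exactly $m_\text{UB}$ each, and then drop the remaining nonnegative terms. The paper just writes the split into $B=\{\ell:m_\ell<m_\text{UB}\}$ and $B^c$ explicitly before discarding the $\sum_{\ell\in B} m_\ell$ part, whereas you discard it in one step; the substance is the same.
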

\begin{proof}
	For this choice of $m_\text{UB}$, we have that, setting $\alpha = \frac{m^\star}{m_\text{UB}}$ and $B:= \{\ell\in [L]: m_\ell< m_\text{UB}\}$,
	\begin{align*}
		\frac{\Delta_f}{\tilde{\Delta}_{f_\text{arr}}}&= \frac{\alpha}{\sum_\ell m_\ell}\cdot \sum_\ell \min\{m_\ell, m_\text{UB}\}\\
		&= \frac{\alpha}{\sum_\ell m_\ell}\cdot \left[\sum_{\ell\in B} m_\ell + \frac{m^\star\left\lceil L/2\right\rceil}{\alpha} \right]\\
		&\geq \frac{\mathsf{OPT}}{2},
	\end{align*}
since $m_\ell\geq 0$, for all $\ell\in [L]$.
\end{proof}
In our experiments in Section \ref{sec:results}, we employ $m_\text{UB} = \text{med} (m_1,\ldots,m_L)$ in demonstrating the empirical performance of \textsc{Array-Averaging} under the \textsc{BestFit} grouping strategy.
\subsection{\textsc{Levy}}
In this algorithm and the next, we attempt to reduce $\Delta_f$ further by simultaneously clipping the number of samples per user and the range of speed values. The algorithm that we present in this section puts together the \textsc{Array-Averaging} algorithm presented in the previous subsection and Algorithm 1 in \cite{userlevel}. However, unlike in the previous section, we make use of a value of $m_\text{UB}$ that is different from $\text{med} (m_1,\ldots,m_L)$ and is instead tailored, via heuristic calculations, to the sensitivity of the function associated with this algorithm.

We now present a qualitative description of the algorithm; the exact choice of the parameters is based on certain heuristics that will be described later.

As in the previous section, we first clip the number of samples contributed by any user using (one of) the array-based grouping strategies and treat the thus-formed arrays as pseudo-users. More precisely, let $A_1,\ldots,A_{\overline{K}}$ be the arrays obtained using the \textsc{BestFit} grouping strategy with a certain value of $m_\text{UB}$ to be specified later (recall that we use \textsc{BestFit} instead of \textsc{WrapAround} since under the \textsc{BestFit} strategy, a user can influence at most one array, unlike in the \textsc{WrapAround} strategy). Next, we clip the range of speed values. Suppose that $a, b$ with $0\leq a\leq b\leq U$ are real numbers such that the speed values of the users lie in $[a,b]$ with high probability. The \textsc{Levy} algorithm first privately estimates this interval $[a,b]$, with privacy loss set to $\varepsilon/2$. We then define the function
\[
f_\text{Levy}(\mathcal{D}):= \frac{1}{\overline{K}}\cdot \sum_{i=1}^{\overline{K}} \Pi_{[a,b]} (\overline{A_i}),
\]
where $\Pi_{[a,b]}$ denotes the projection operator into the interval $[a,b]$, with
\[
\Pi_{[a,b]}(x) = \min \{b, \max\{a,x\}\},
\]
for any $x\in \mathbb{R}$. As before, $\overline{A_i}:= \frac{1}{w(A_i)} \sum_{j=1}^{w(A_i)} A_i(j)$ is the mean of the samples contributed by array $i\in [\overline{K}]$. Note that now the user-level sensitivity is given by
\begin{equation}
\Delta_{f_\text{Levy}} = \frac{b-a}{\overline{K}},
\label{eq:sensitivitylevy}
\end{equation}
since we implicitly assume that grouping is performed according to the \textsc{BestFit} strategy.

The \textsc{Levy} algorithm (with \textsc{BestFit} grouping) then computes
\[
M_\text{Levy}(\mathcal{D}) = f_\text{Levy}(\mathcal{D})+\text{Lap}(2\Delta_{f_\text{Levy}}/\varepsilon).
\]
Note that in the above expression, the privacy loss is assumed to be $\varepsilon/2$. Overall, the privacy loss for both private interval estimation and for private mean estimation is $\varepsilon$, following the basic composition theorem \cite[Corollary 3.15]{dworkroth}. Hence, the algorithm \textsc{Levy} is $\varepsilon$-DP.

Observe that when $b-a<U/2$, the standard deviation of the noise added in this case is less than that added using \textsc{Array-Averaging} with \textsc{BestFit} grouping.

We now explain the heuristics we employ to select the parameters $a, b$ in the \textsc{Levy} algorithm and $m_\text{UB}$.

\subsubsection{Private Interval Estimation}
The subroutine we use in this algorithm to privately compute the ``high-probability'' interval $[a,b]$ is borrowed from Algorithm 6 in \cite{userlevel}, which is used to privately compute an estimate of the $\left(\frac14,\frac34\right)$-interquantile interval of a given collection of i.i.d. (random) scalar values $X_1,\ldots,X_n$ (in our setting, these are the array means $\overline{A}_1,\ldots,\overline{A}_{\overline{K}}$). In order to describe this subroutine, which incorporates some additional modifications to suit our setting, we provide necessary background. The application of Algorithm 6 in \cite{userlevel} crucially relies on the following concentration property of the values $X_1,\ldots,X_n$ provided to the algorithm:
\begin{definition}
	A random sequence $X^n$ supported on $[0,M]$ is $(\tau, \gamma)$-concentrated ($\tau$ is called the ``concentration radius”) if there exists $x_0\in [0,M]$ such that with probability at least $1 - \gamma$, 
	\[
	\max_{i\in [n]} |X_i-x_0|\leq \tau.
	\]
\end{definition}

First, observe that each  ``array mean'' $\overline{A}_i$, $i\in [\overline{K}]$ is the sample mean of values $A_i(j)\in [0:U]$, where $1\leq j\leq w(A_i)$. {Although the samples in each array are drawn in a potentially non-i.i.d. fashion from an unknown distribution, we simply rely on heuristics that assume that the data samples are i.i.d. and that each array is fully filled, with $\overline{K} = K$.}


By an application of Hoeffding's inequality (see, e.g.,  \cite[Theorem 2.2.6]{vershynin}), we have that each array mean $\overline{A}_i$, $i\in [\overline{K}]$, is $\frac{U^2}{4m_\text{UB}}$-subGaussian (see, e.g., \cite[Theorem 2.1.1]{ed}). Hence, if the speed samples were i.i.d., from a simple application of the union bound (see, e.g., \cite[Theorem 2.2.1]{ed}), we obtain that the sequence $(\overline{A}_1,\ldots,\overline{A}_{\overline{K}})$ is in fact $(\tau,\gamma)$-concentrated about the expected value, where
\begin{equation}
\tau = U\cdot \sqrt{\frac{\log (2\overline{K}/\gamma)}{2m_\text{UB}}}.
\label{eq:tau}
\end{equation}
We use this value of $\tau$ to compute the ``high-probability'' interval $[a,b]$, with privacy loss $\varepsilon/2$, as shown in Algorithm \ref{alg:interval}.
\begin{algorithm}
	\caption{Private Interval Estimation}
	\begin{algorithmic}[1]
		\Procedure{PrivateInterval}{$(\overline{A}_1,\ldots,\overline{A}_{\overline{K}}), \varepsilon,\tau,U$}
		\State Divide $[0,U]$ into $U/\tau$ disjoint bins, each of width $\tau$. Let $T$ be the midpoints of the bins.
		\State Set $\mu_i\gets \min_{x\in T} |\overline{A}_i-x|$, for all $i\in [\overline{K}]$.
		\State For $x\in T$, define the cost function
		\[
		c(x) = \max \{|\{i\in [\overline{K}]: \mu_i<x\}|, |\{i\in [\overline{K}]: \mu_i>x\}|\}.
		\]
		\State Sample $\hat{\mu}\in T$ from $Q$ where
		\[
		Q(\hat{\mu}=x) = \frac{e^{-\varepsilon c(x)/4}}{\sum_{x'\in T}e^{-\varepsilon c(x')/4}}.
		\]
		\State Set $a\gets \max\left\{0,\hat{\mu}-\frac{3\tau}{2}\right\}$, $b\gets \min \left\{\hat{\mu}+\frac{3\tau}{2},U\right\}$.
		\EndProcedure
	\end{algorithmic}
	\label{alg:interval}
\end{algorithm}
\subsubsection{Choosing $m_\text{UB}$}
The subroutine we use to choose the length of the arrays $m_\text{UB}$ is tailored to the sensitivity of the function $f_\text{Levy}$ in \eqref{eq:sensitivitylevy}. For a fixed $m=m_\text{UB}$, note that from Algorithm \ref{alg:interval},
\begin{align}
	\label{eq:deltalevy}
\Delta_{f_\text{Levy}}(m)
&= \min\left\{\frac{3\tau}{\overline{K}}, \frac{U}{\overline{K}}\right\} \notag\\ &= \min\left\{\frac{3U}{\overline{K}}\sqrt{\frac{\log (2\overline{K}/\gamma)}{2m}}, \frac{U}{\overline{K}}\right\}.
\end{align}
In order to reduce the sensitivity, our heuristic aims to maximize the $\overline{K} \sqrt{m}$ term  in \eqref{eq:deltalevy}, and sets
\begin{equation}
	\label{eq:uboptim}
m_\text{UB} \in \arg \max_m \frac{\sum_{\ell=1}^{L}\min \left \{m_{\ell}, m \right \}}{\sqrt{m}},
\end{equation}
where the right-hand side above is a set of cardinality potentially larger than $1$. We then numerically solve this optimization problem over $m_\star\leq m\leq m^\star$, to obtain a single, suitable value of $m_\text{UB}$.

\subsection{\textsc{Quantile}}
\label{sec:quantile}
The \textsc{Quantile} algorithm is a ``noisy mean-of-projections" that is quite similar in flavour to the \textsc{Levy} algorithm, with the only difference being the choice of the interval $[a',b']$ into which the array means are projected.

{ More precisely, the \textsc{Quantile} algorithm sets $m_\text{UB}$ as in \eqref{eq:uboptim}, and uses the \textsc{BestFit} grouping strategy to create arrays containing the speed samples.} Next, the algorithm privately estimates a ``high-probability'' interval $[a',b']$ (in a manner different from Algorithm \ref{alg:interval}), with privacy loss $\varepsilon/2$. We then define the function
\[
f_\text{Quantile}(\mathcal{D}):= \frac{1}{\overline{K}}\cdot \sum_{i=1}^{\overline{K}} \Pi_{[a',b']} (\overline{A_i}),
\]
where the projection operator and the array means are as defined earlier. Once again, we have that the sensitivity
\[
\Delta_{f_\text{Quantile}} = \frac{b'-a'}{\overline{K}},
\]
The \textsc{Quantile} algorithm (with \textsc{BestFit} grouping) then computes
\[
M_\text{Quantile}(\mathcal{D}) = f_\text{Quantile}(\mathcal{D})+\text{Lap}(2\Delta_{f_\text{Quantile}}/\varepsilon),
\]
where the privacy loss for the mean estimation is set to $\varepsilon/2$. We now describe two subroutines that we use to privately estimate the high-probability $[a',b']$.
\subsubsection{\textsc{FixedQuantile}}
This subroutine privately estimates the $\left(\frac{1}{10},\frac{9}{10}\right)$-interquantile interval of the array means $\overline{A}_1,\ldots,\overline{A}_{\overline{K}}$, using Algorithm 2 in \cite{smith}. A notable difference between this algorithm and Algorithm 6 in \cite{userlevel} (used in \textsc{Levy}) is that this algorithm does not quantize the interval $[0,U]$ in which the samples lie as in Step 2 of Algorithm \ref{alg:interval}.  Note that this algorithm computes estimates of $a'$ and $b'$ separately, and we set the privacy loss of each of these computations to be $\varepsilon/4$, so that the overall privacy loss for private quantile estimation is $\varepsilon/2$.
\subsubsection{\textsc{OptimizedQuantile}}
This subroutine privately estimates (with privacy loss $\varepsilon/2$) the interval $[a',b']$, which is now chosen to minimize the sum of the absolute estimation errors due to privacy and due to clipping the speed values, following the work in \cite{amin}. In particular, an argument similar to that in Section 3 in \cite{amin} advocates that in order to minimize this sum of absolute estimation errors, we need to set $b$ to be the $\left \lceil \frac{2}{\varepsilon}\right \rceil ^{\text{th}}$-largest value among $(\overline{A}_1,\ldots,\overline{A}_{\overline{K}})$, and by symmetry, we need to set $a$ to be the $\left \lfloor \frac{2}{\varepsilon}\right \rfloor ^{\text{th}}$-smallest value among $(\overline{A}_1,\ldots,\overline{A}_{\overline{K}})$. We then proceed to privately estimate the $\left(\frac{1}{\overline{K}}\cdot\left \lceil \frac{2}{\varepsilon}\right \rceil ,1-\frac{1}{\overline{K}}\cdot\left \lceil \frac{2}{\varepsilon}\right \rceil\right)$-interquantile interval using Algorithm 2 in \cite{smith}. Again, we set the privacy loss of computing $a'$ and $b'$ to be individually $\varepsilon/4$, so that the overall privacy loss for private quantile estimation is $\varepsilon/2$.

\section{Results}
\label{sec:results}
\subsection{Setup}

\begin{figure*}[]
	\centering
	\includegraphics[width=\linewidth]{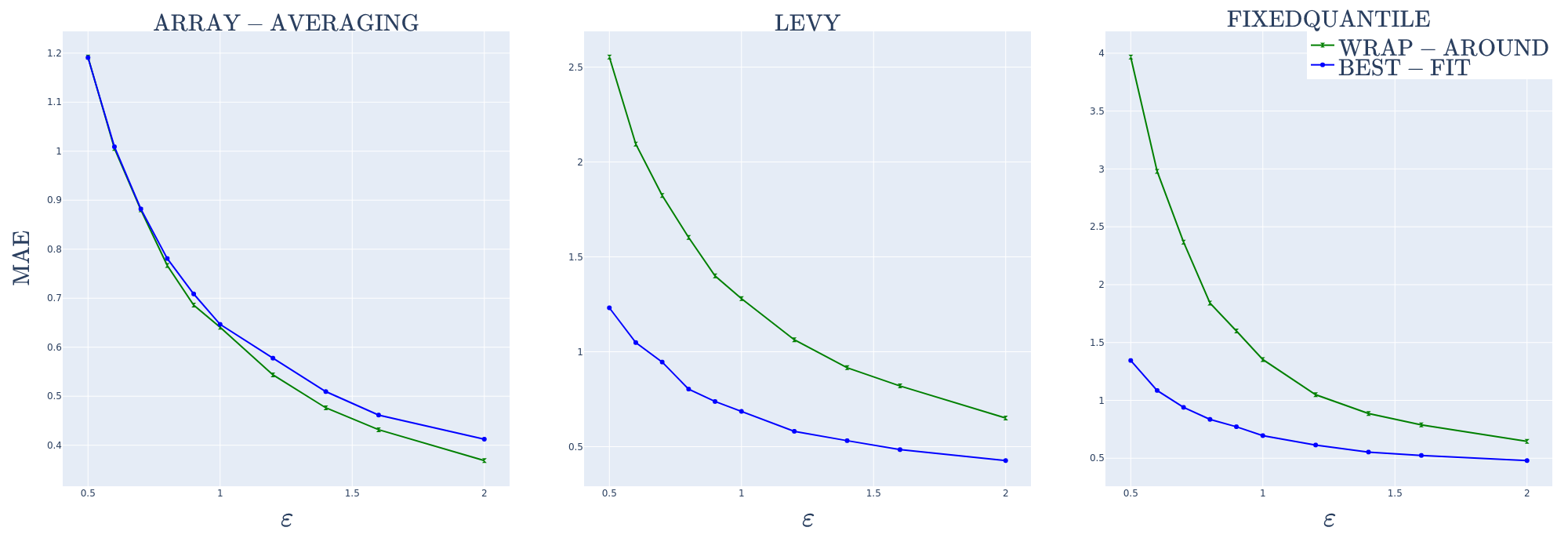}
	\caption{Plots comparing $E_\text{MAE}$ for the \textsc{Array-Averaging}, \textsc{Levy}, and \textsc{Quantile} algorithms, under the two grouping strategies. For the \textsc{Levy} algorithm, we fix $\gamma = 0.2$, and for the \textsc{Quantile} algorithm, we use the \textsc{FixedQuantile} subroutine.}
	\label{fig:group}
\end{figure*}

{We test and evaluate our algorithms on two types of datasets: 1) a real-world ITMS dataset $\mathcal{C}$ containing non-i.i.d. speed values contributed by bus drivers over a span of 3 days in an Indian city, and 2) a synthetic dataset $\mathcal{D}$ containing i.i.d. samples drawn using insights from the ITMS data. We mention that we first preprocess the dataset $\mathcal{C}$ to remove the $0$-valued speed samples, which correspond to times when the buses are stationary. Further, towards reducing the error due to clipping, we populate the \emph{sample means} of the speed samples of every user in the arrays created, in place of the true speeds. More precisely, suppose that locations $[j,j+\min\{m_\ell,m_\text{UB}\}-1]$ in array $A_i$ are allotted to samples from user $\ell\in [L]$, using either the \textsc{WrapAround} or \textsc{BestFit} strategy. The speed samples that are populated in these locations are $(\mu_\ell,\ldots,\mu_\ell) \in (0,U]^{\min\{m_\ell,m_\text{UB}\}}$, where $\mu_\ell:= \frac{1}{m_\ell}\sum_{t=1}^{m_\ell}S_t^{(\ell)}$. 
	
We ran each private mean estimation algorithm described earlier $10^4$ times; the plots showing the performance of our algorithms are shown in this section. We test the accuracy of our algorithms, via the error in estimation, for privacy loss $\varepsilon$ ranging from $0.5$ to $2$.}

For any dataset $\overline{D}$, we use the mean absolute error (or MAE) metric, $$E_\text{MAE}(\overline{\mathcal{D}}) = {\frac{\sum_{i=0}^{10^4} \left \lvert M^{(i)}(\overline{\mathcal{D}}) - \mu(\overline{\mathcal{D}})\right \lvert}{10^4}},$$ to evaluate the performance of the algorithm. Here, for $i\in [10^4]$, $M^{(i)}(\overline{\mathcal{D}})$ is the result of running the algorithm $M$ (which is one of  \textsc{Array-Averaging}, \textsc{Levy}, and \textsc{Quantile}, with a choice of subroutines) on the dataset $\overline{\mathcal{D}}$ at iteration $i$, and $\mu(\overline{\mathcal{D}})$ is the true sample mean.  Since, for $Z\sim \text{Lap}(b)$, we have $\E[|Z|] = b$, we simply use $E_\text{MAE}(\overline{\mathcal{D}}) = \Delta_f/\varepsilon$ for \textsc{Baseline}.


\subsection{Experimental Results for ITMS Data}

For our ITMS dataset $\mathcal{C}$, the number of users $L=217$, the maximum and minimium number of samples contributed by any user are respectively $m^{*} = 417$ and $m_* = 1$, and the total number of samples contributed by all user was $\sum_\ell m_\ell = 17166$. The true sample mean of speed records $\mu(\mathcal{C})$ was $20.66769$. We also mention that we pick $\text{med}(m_1,\ldots,m_L) = 46$. {Also, for  $m_\text{UB}$ as in \eqref{eq:uboptim}, the number of arrays (or pseudo-users) $K$ under \textsc{WrapAround} grouping was $160$ and the number of arrays $\overline{K}$ under  \textsc{BestFit} grouping was $164$.}


First, we compare the performance of the algorithms that employ clipping of the number of user samples, under \textsc{WrapAround} and \textsc{BestFit} grouping strategies. In particular, we evaluate $E_\text{MAE}$ for \textsc{Array-Averaging}, for \textsc{Levy} with $\gamma = 0.2$, and for \textsc{Quantile}, with the \textsc{FixedQuantile} subroutine, under the two grouping strategies. Figure \ref{fig:group} shows these comparisons, for $\varepsilon$ varying from $0.5$ to $2$. While for the \textsc{Array-Averaging} algorithm, the performance of the two grouping strategies is similar, it is clear from the plots that for the \textsc{Levy} and \textsc{Quantile} algorithms, the \textsc{BestFit} strategy performs much better than the \textsc{WrapAround} strategy. This conforms with our expectation (see the discussion in Section \ref{sec:grouping}, since the \textsc{WrapAround} strategy results in a multiplicative factor of $2$ in the user-level sensitivity).

In Figure \ref{fig:itmscompare}, we hence fix the grouping strategy to be \textsc{BestFit} and compare the performance of the 4 algorithms described in Section \ref{sec:alg} on the non-i.i.d. ITMS dataset. Clearly, clipping of the number of samples per user and the speed values results in much better performance than the na\"{i}ve \textsc{Baseline} approach. However, there is very little difference between the performance of the other algorithms, possibly owing to the relatively small size of the ITMS dataset.


%
%

%
%
%

\subsection{Experimental Results for Synthetic Data}

Next, we artificially generate data samples, by drawing insights from the ITMS dataset. This artificially generated dataset, denoted as $\mathcal{D}$, allows us to study the performance of the algorithms in Section \ref{sec:alg} when the number of samples per user or the number of users is large. In particular, let $\widehat{L}$ denote the number of users and $\widehat{m_\ell}$, for $\ell \in [\widehat{L}]$ denote the number of samples contributed by user $\ell$. Recall that $L$ and $\{m_\ell: \ell \in [L]\}$ denoted the number of users and the collection of the number of samples per user for the original ITMS dataset. We generate i.i.d. speed samples $\left\{\widehat{S}_j^{(\ell)}: \ell \in [\widehat{L}], j\in [\widehat{m_\ell}]\right\}$, where for any $\ell, j$, we have $\widehat{S}_j^{(\ell)}\sim \Pi_{[0,U]}(X)$, where $X\sim \mathcal{N}(\mu(\mathcal{C}),\sigma^2(\mathcal{C}))$. Here, recall that $\mu(\mathcal{C})$ is the true sample mean of the ITMS dataset. Further, $\sigma^2(\mathcal{C})$ is the true sample variance of the ITMS dataset, defined as
\[
\sigma^2(\mathcal{C}) = \frac{1}{\sum_{\ell=1}^{L}{m_\ell}}\cdot \sum_{\ell=1}^L \sum_{j=1}^{m_\ell}\left(S_j^{(\ell)}-\mu(\mathcal{C})\right)^2.
\]
For the ITMS dataset, $\sigma^2(\mathcal{C}) = 115.135$. Further, note that the  random variable $\widehat{S}_j^{(\ell)}$ obeys:
\begin{align*}
\Pr\left[\widehat{S}_j^{(\ell)} = 0\right] &= Q\left(\frac{\mu(\mathcal{C})}{\sigma(\mathcal{C})}\right)\ \text{and}\\ \Pr\left[\widehat{S}_j^{(\ell)} = U\right] &= Q\left(\frac{U-\mu(\mathcal{C})}{\sigma(\mathcal{C})}\right).
\end{align*}
  In what follows, we set $\mu := \mu(\mathcal{C})$ and $\sigma^2 := \sigma^2(\mathcal{C})$. We consider two settings of interest, for a fixed positive integer $\lambda$.
\begin{enumerate}
	\item \textbf{Sample scaling}: In this setting, we set $\widehat{L} = L$ and $\widehat{m}_\ell = 10\cdot m_\ell$, for all $\ell\in [\widehat{L}]$. In this case, $\widehat{m}^\star := \max_{\ell\in [\widehat{L}]} \widehat{m}_\ell = \lambda\cdot m^\star$. 
	\item \textbf{User scaling}: In this setting, we set $\widehat{L} = \lambda\cdot L$ where $\lambda$ is a positive integer. For each $i\in [\lambda]$, we let $\widehat{m}_{\lambda\cdot(\ell-1)+i} = m_\ell$,  for all $\ell\in [{L}]$. Here, $\widehat{m}^\star = m^\star$.
\end{enumerate}

We also let $\widehat{m}_\text{UB}$ denote the new lengths of the arrays used for the dataset $\mathcal{D}$, with $\widehat{K}$ and $\widehat{\overline{K}}$ denoting, respectively, the resultant number of arrays under the \textsc{WrapAround} and \textsc{BestFit} grouping strategies. Further, for each of the algorithms considered in Section \ref{sec:alg}, we denote their sensitivities under sample scaling (resp. user scaling) by using an additional superscript `$(s)$' (resp. `$(u)$') over the existing notation.

We first consider the setting of \textbf{sample scaling}, with $\lambda = 10$. Figure \ref{fig:samplescale} shows comparisons between the different algorithms on this synthetic dataset, with $\gamma = 0.2$ for \textsc{Levy}. Here, we implicitly assume that the grouping strategy employed is \textsc{BestFit}. It is evident from the plots that \textsc{Baseline} performs much poorer than the clipping algorithms, as expected. However, interestingly, the \textsc{Levy} algorithm performs better than all other algorithms. 

To see why, we first need the following simple lemma.

\begin{lemma}
	\label{lem:samplescale}
	Under sample scaling, we have that 
	\[
	\widehat{m}_\text{UB} = \lambda\cdot m_\text{UB},\ \widehat{K} = K,\ \text{and}\ \widehat{\overline{K}} = \overline{K},
	\]
	for $m_\text{UB}$ chosen either as the sample median or as in \eqref{eq:uboptim}.
\end{lemma}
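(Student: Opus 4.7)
The plan is to prove the three equalities in sequence, since the latter two both rely on the first. For $\widehat{m}_\text{UB} = \lambda m_\text{UB}$, I would handle the two choices of $m_\text{UB}$ separately. For $m_\text{UB} = \text{med}(m_1,\ldots,m_L)$, multiplying by the positive constant $\lambda$ preserves the ordering of the $m_\ell$, so $\text{med}(\lambda m_1,\ldots,\lambda m_L) = \lambda \cdot \text{med}(m_1,\ldots,m_L)$ immediately. For the choice defined by \eqref{eq:uboptim}, I would substitute $m = \lambda m'$ and factor $\sqrt{\lambda}$ out of the scaled objective, rewriting $\frac{\sum_\ell \min\{\lambda m_\ell, \lambda m'\}}{\sqrt{\lambda m'}}$ as $\sqrt{\lambda} \cdot \frac{\sum_\ell \min\{m_\ell, m'\}}{\sqrt{m'}}$. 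Because the feasible range $[\widehat{m}_\star,\widehat{m}^\star] = [\lambda m_\star, \lambda m^\star]$ also scales by $\lambda$, the bijection $m \leftrightarrow m/\lambda$ identifies the scaled and unscaled optimization problems up to a positive multiplicative constant, so their argmaxes are related exactly by a factor of $\lambda$.

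With $\widehat{m}_\text{UB} = \lambda m_\text{UB}$ in hand, the identity $\widehat{K} = K$ follows directly from the closed form \eqref{eq:K}: a factor of $\lambda$ appears in both numerator and denominator of the expression inside the floor, and cancels exactly.

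For $\widehat{\overline{K}} = \overline{K}$, the key point is that Algorithm \ref{alg:bestfit} is invariant under a common positive scaling of all user sample counts and of the array capacity. I would argue by induction on the user index $\ell$ that after the $\ell$-th iteration of the outer loop, (i) the fill levels $w(A_i)$ in the scaled instance are exactly $\lambda$ times those in the unscaled instance, and (ii) the array selected for user $\ell$ is the same in both. The inductive step uses two facts: the eligibility condition $m_\text{UB} - w(A_i) \geq r$ is preserved under multiplication of both sides by $\lambda$, and the least-index argmax in Step 5 depends only on the relative ordering of the fill levels and on the array indices, both of which are unaffected by the common scaling. Reading off the number of non-empty arrays at termination then gives the claim.

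The main (and really only) place requiring care is this invariance argument for BestFit, specifically the verification that the tie-breaking by least index produces the same selection in the scaled and unscaled runs at every outer iteration. This is essentially immediate once the induction is set up, but it is the one step that is not a purely algebraic scaling manipulation; no serious obstacle is anticipated.
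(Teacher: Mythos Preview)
The paper does not actually prove Lemma~\ref{lem:samplescale}; it is introduced as a ``simple lemma'' and left without argument. Your proposal is correct and would serve as a complete proof: the median case is immediate by order-preservation under positive scaling; the substitution $m=\lambda m'$ reduces the scaled instance of \eqref{eq:uboptim} to $\sqrt{\lambda}$ times the unscaled objective over the bijected feasible range; $\widehat{K}=K$ drops out of \eqref{eq:K} by cancellation; and your induction on the outer loop of Algorithm~\ref{alg:bestfit}, using that both the eligibility test $m_\text{UB}-w(A_i)\geq r$ and the least-index argmax over fill levels are invariant under a common positive scaling of all counts and the capacity, cleanly gives $\widehat{\overline{K}}=\overline{K}$. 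The only point worth flagging is that since the argmax in \eqref{eq:uboptim} may not be a singleton, your argument really shows that the two argmax \emph{sets} are related by the factor $\lambda$; the equality $\widehat{m}_\text{UB}=\lambda m_\text{UB}$ then holds provided the numerical solver selects corresponding points (e.g., the smallest maximizer), which is implicit in the paper's treatment.
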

We now proceed to justify the performance trends observed under sample scaling, by analyzing the dependence of the sensitivities on the scaling factor.

\begin{proposition}
	We have that
	\begin{align*}
	\Delta^{(s)}_{f_\text{Baseline}} = \Delta_{f_\text{Baseline}},\
	\Delta^{(s)}_{f_\text{arr, wrap}} = \Delta_{f_\text{arr, wrap}},\ \Delta^{(s)}_{f_\text{arr, best}} = \Delta_{f_\text{arr, best}},
	\end{align*}
and 
\[
\Delta^{(s)}_{f_\text{Levy}} = \frac{1}{\sqrt{\lambda}}\cdot  \Delta_{f_\text{Levy}}.
\]
\end{proposition}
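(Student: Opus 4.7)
The plan is to start from Lemma \ref{lem:samplescale}, which already pins down the key invariants under sample scaling, namely that $L$ is unchanged, every per-user count $m_\ell$ is uniformly multiplied by $\lambda$ (so $\widehat{m}^\star = \lambda m^\star$ and $\sum_\ell \widehat{m}_\ell = \lambda \sum_\ell m_\ell$), the clipping parameter satisfies $\widehat{m}_\text{UB} = \lambda\, m_\text{UB}$, and the array counts $K$ and $\overline{K}$ under the two grouping strategies remain the same. With this in hand, the four identities essentially reduce to substitution into the sensitivity formulas already derived earlier in the paper.

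First I would handle \textsc{Baseline}. The sensitivity is $\Delta_f = \frac{U m^\star}{\sum_\ell m_\ell}$ by \eqref{eq:delf}, and replacing $m^\star$ and $\sum_\ell m_\ell$ by their scaled versions introduces a common factor of $\lambda$ in the numerator and denominator, which cancels. Next I would handle the two array-averaging sensitivities: by \eqref{eq:delfbest} (and the preceding display for \textsc{WrapAround}), $\Delta_{f_\text{arr, wrap}} = 2U/K$ and $\Delta_{f_\text{arr, best}} = U/\overline{K}$. Since Lemma \ref{lem:samplescale} gives $\widehat{K} = K$ and $\widehat{\overline{K}} = \overline{K}$, both of these are unchanged verbatim.

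For \textsc{Levy}, the starting point is the explicit formula \eqref{eq:deltalevy}:
\[
\Delta_{f_\text{Levy}}(m_\text{UB}) \;=\; \min\!\left\{\frac{3U}{\overline{K}}\sqrt{\frac{\log(2\overline{K}/\gamma)}{2 m_\text{UB}}},\ \frac{U}{\overline{K}}\right\}.
\]
Under sample scaling, the second argument of the $\min$ is unaffected (since $\overline{K}$ is invariant), while the first argument gets a $\sqrt{\widehat{m}_\text{UB}} = \sqrt{\lambda}\sqrt{m_\text{UB}}$ in the denominator, yielding a multiplicative factor of $1/\sqrt{\lambda}$. Under the working regime of interest (i.e., where the concentration-based bound $3\tau/\overline{K}$ is strictly smaller than $U/\overline{K}$, which is precisely the regime in which the private interval $[a,b]$ of Algorithm \ref{alg:interval} is informative), the minimum is attained at the first term both before and after scaling, and one reads off $\Delta^{(s)}_{f_\text{Levy}} = \frac{1}{\sqrt{\lambda}}\,\Delta_{f_\text{Levy}}$.

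The only non-routine step is the last one: strictly speaking one must verify that the $\min$ in \eqref{eq:deltalevy} is attained by the $\tau$-based branch, since otherwise scaling would leave $\Delta_{f_\text{Levy}}$ unchanged rather than shrinking it by $1/\sqrt{\lambda}$. I would either state this as a standing assumption consistent with the heuristic motivation behind the \textsc{Levy} algorithm (i.e., $3\sqrt{\log(2\overline{K}/\gamma)/(2m_\text{UB})}<1$, which is the only case where the projection step adds value), or observe that for any sufficiently large $\lambda$ the $\tau$-branch is certainly the smaller one, so the identity holds for all large enough $\lambda$. Everything else is direct substitution using Lemma \ref{lem:samplescale}.
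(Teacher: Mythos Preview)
Your proposal is correct and follows exactly the paper's approach: the paper's entire proof is the one-line observation that the claims follow directly from the sensitivity definitions together with Lemma~\ref{lem:samplescale}, which is precisely the substitution argument you spell out. In fact, you are more careful than the paper in flagging the branch of the $\min$ in \eqref{eq:deltalevy}; the paper simply asserts the $1/\sqrt{\lambda}$ scaling without commenting on this point.
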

\begin{proof}
	The above scaling of the sensitivities under sample scaling follows directly from the definitions of the sensitivities and from Lemma \ref{lem:samplescale}.
\end{proof}

It is clear from the above proposition that the sensitivity (and hence the amount of Laplace noise added for privacy) is much smaller for \textsc{Levy} as compared to \textsc{Baseline} and \textsc{Array-Averaging}, for large enough $\lambda$. 

Now, for the case of the \textsc{Quantile} algorithm, a precise analysis of the sensitivity of the algorithm as stated is hard, owing to possible errors in the private estimation of the quantiles used.
We believe the relative better performance of \textsc{Levy} is due the additional errors in private estimation of the quantiles in \textsc{Quantile}. Hence, in general, when the dataset contains a large number of samples per user, we recommend using \textsc{Levy}.

Figure \ref{fig:userscale} for \textbf{user scaling} shows that the \textsc{FixedQuantile} subroutine outperforms all the other algorithms. Furthermore, \textsc{Array-Averaging} performs second-best. We mention that here we use $m_\text{UB}$ as in \eqref{eq:uboptim} for \textsc{Array-Averaging} as well, to maintain uniformity. We now justify these performance trends. Analogous to the setting with sample scaling, the following simple lemma holds.
\begin{figure}[!h]
	\centering
	\includegraphics[width=0.9\linewidth]{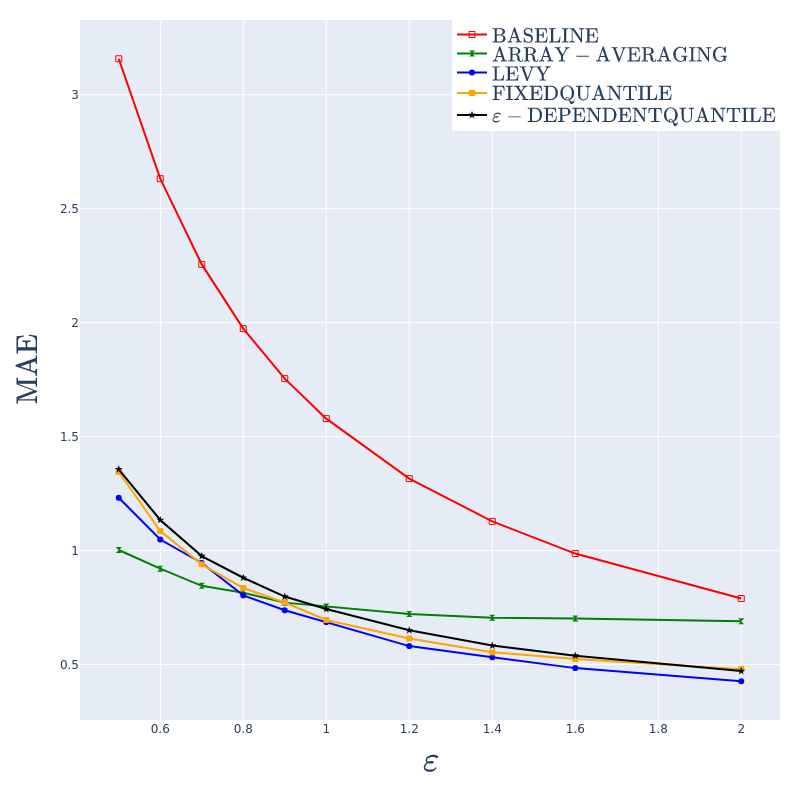}
	\caption{Plots comparing the performance of algorithms on real-world ITMS data}
	\label{fig:itmscompare}
\end{figure}



\begin{figure}[!h]
	\centering
	\includegraphics[width=0.9\linewidth]{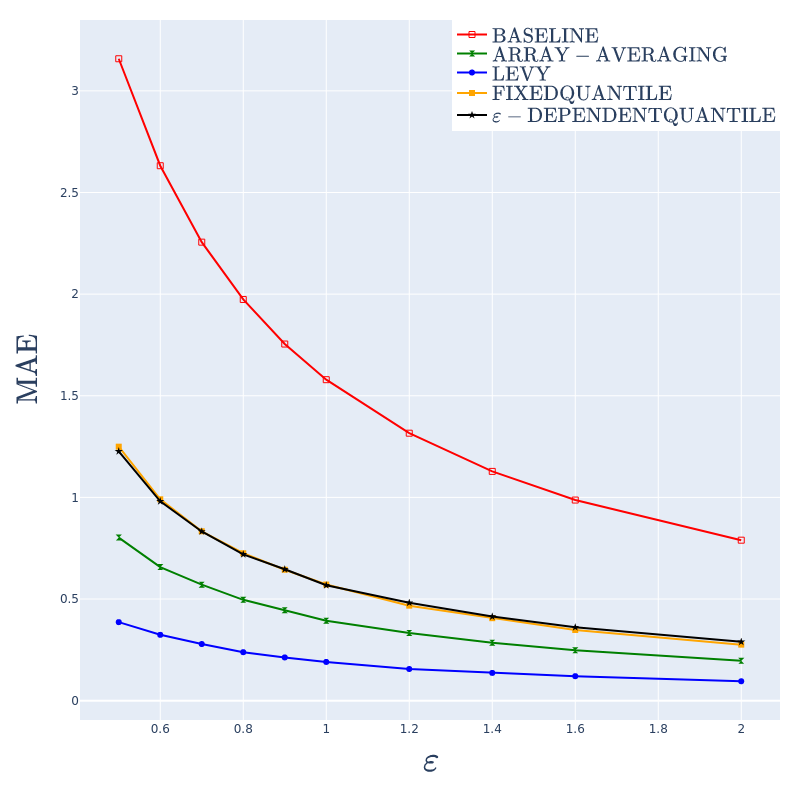}
	\caption{Plots comparing the performance of algorithms under \textbf{sample scaling}}
	\label{fig:samplescale}
\end{figure}

\begin{figure}[!h]
	\centering
	\includegraphics[width=0.9\linewidth]{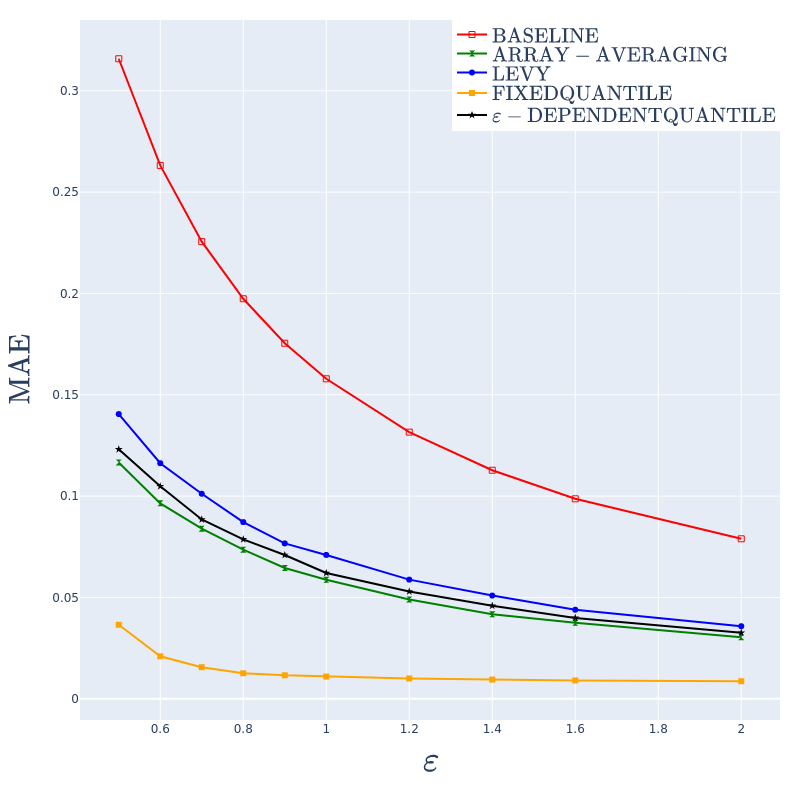}
	\caption{Plots comparing the performance of algorithms under \textbf{user scaling}}
	\label{fig:userscale}
\end{figure}
\begin{lemma}
	\label{lem:user_scale}
	Under user scaling, we have that 
	\[
	\widehat{m}_\text{UB} = m_\text{UB},\ \widehat{K} = \lambda\cdot K,\ \text{and}\ \widehat{\overline{K}} = \lambda\cdot \overline{K},
	\]
	for $m_\text{UB}$ chosen as in \eqref{eq:uboptim}.
\end{lemma}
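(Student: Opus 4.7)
The plan is to establish each of the three equalities in turn. The unifying observation is that under user scaling, each original sample count $m_\ell$ is replicated exactly $\lambda$ times in the scaled sequence $\widehat{m}_1, \ldots, \widehat{m}_{\widehat{L}}$ (by the indexing rule $\widehat{m}_{\lambda(\ell-1)+i} = m_\ell$), so that for every $m$,
\[
\sum_{\ell=1}^{\widehat{L}} \min\{\widehat{m}_\ell, m\} \;=\; \lambda \sum_{\ell=1}^L \min\{m_\ell, m\}.
\]

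From this, $\widehat{m}_\text{UB} = m_\text{UB}$ is immediate: the objective in \eqref{eq:uboptim} evaluated on the scaled dataset is the original objective multiplied by the positive constant $\lambda$, so the argmax set is unchanged. For $\widehat{K} = \lambda K$, I would substitute into \eqref{eq:K}, using the fact that the numerator scales by $\lambda$ while the denominator $\widehat{m}_\text{UB}$ equals $m_\text{UB}$; this gives $\widehat{K} = \lfloor \lambda \sum_\ell \min\{m_\ell, m_\text{UB}\}/m_\text{UB}\rfloor$, which equals $\lambda K$ when the inner sum is a multiple of $m_\text{UB}$, and otherwise agrees up to a boundary correction of at most $\lambda-1$ coming from the floor in \eqref{eq:K}.

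The main obstacle is $\widehat{\overline{K}} = \lambda \overline{K}$, because BestFit is an online algorithm whose output depends on the processing order and tie-breaking, not just on the multiset of sample counts. The plan is to exploit the block structure of the scaled input: since users are sorted in non-increasing order of $m_\ell$, the $\lambda$ copies of each original user appear contiguously. I would prove by induction on $\ell$ that after BestFit processes the $\lambda$ copies of the first $\ell$ original users, the resulting fill profile of the non-empty arrays is exactly the $\lambda$-fold replication of the fill profile that BestFit produces after processing the first $\ell$ users on the original dataset. The inductive step for \emph{large} users with $m_\ell \geq m_\text{UB}$ is immediate, since each copy fills a fresh array. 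For \emph{small} users, where the $\lambda$ copies of user $\ell$ may be distributed into partially filled arrays carried over from earlier blocks, one must carefully track BestFit's ``max-fill, least-index'' rule to verify that the symmetry is preserved across the $\lambda$ replicated blocks. Once this inductive claim is established, summing across $\ell$ yields $\widehat{\overline{K}} = \lambda \overline{K}$, completing the proof.
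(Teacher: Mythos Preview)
The paper does not supply a proof: it introduces the lemma as ``simple'' and analogous to Lemma~\ref{lem:samplescale}, and uses it only to justify the $\lambda$-scaling of sensitivities in the subsequent proposition, where the conclusions are asymptotic in $\lambda$. Your arguments for $\widehat{m}_\text{UB}=m_\text{UB}$ and $\widehat{K}=\lambda K$ are exactly what is needed, and you correctly flag the floor-induced boundary correction for $\widehat{K}$.

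The genuine gap is in the \textsc{BestFit} part. The inductive invariant you propose---that after processing the $\lambda$ copies of the first $\ell$ users the fill profile is the $\lambda$-fold replication of the original fill profile---does not hold in general, and in fact the exact equality $\widehat{\overline{K}}=\lambda\overline{K}$ can fail. Take $L=2$ with $m_1=10$, $m_2=3$; the objective in \eqref{eq:uboptim} is $\sqrt{m}+3/\sqrt{m}$ on $[3,10]$, which is increasing, so $m_\text{UB}=10$. On the original dataset \textsc{BestFit} produces fills $(10,3)$ and $\overline{K}=2$. Under user scaling with $\lambda=2$ the sorted input is $(10,10,3,3)$: the two size-$10$ users fill $A_1,A_2$, then the first size-$3$ user goes into $A_3$, and the second size-$3$ user is placed by the ``most-filled'' rule back into $A_3$ (fill $3$, still accommodating), yielding fills $(10,10,6)$ and $\widehat{\overline{K}}=3\neq 4=\lambda\overline{K}$. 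The max-fill rule can stack replicated small users into a single bin, which breaks the replication symmetry your induction needs. The statement should therefore be read as $\widehat{\overline{K}}=\Theta(\lambda\,\overline{K})$, which is all that the downstream large-$\lambda$ comparison of noise standard deviations actually requires.
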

With the above lemma in place, we first provide a justification of the superior performance of \textsc{Array-Averaging}, compared to \textsc{Levy} and \textsc{$\varepsilon$-DependentQuantile}, where we assume that the exact sample-dependent quantiles are computed in \textsc{$\varepsilon$-DependentQuantile} (i.e., we assume that the error in estimation of the sample quantiles due to privacy is zero). To  denote the standard deviation of noise added in each algorithm described in Section \ref{sec:alg}, under user scaling, we use the same notation as for the sensitivities under user scaling, but replacing $\Delta$ with $\sigma$. We recall that for $X \sim \mathrm{Lap}(b)$, the standard deviation of $X$, denoted by $\sigma$, is $\sqrt{2} b$. We implicitly assume that \textsc{BestFit} is used as the grouping strategy in each algorithm.
\begin{proposition}
 For any fixed $\varepsilon$, for large enough $\lambda$, we have that with high probability, $$\sigma^{(u)}_{f_\text{arr, best}}< \min\left\{ \sigma^{(u)}_{f_\text{Levy}}, \sigma^{(u)}_{f_\text{Quantile}}\right\},$$ under the \textsc{$\varepsilon$-DependentQuantile} subroutine for \textsc{Quantile}, if the exact sample quantiles are employed.
\end{proposition}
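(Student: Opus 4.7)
The plan is to pin down each of the three noise standard deviations as explicit functions of $\lambda$, using Lemma~\ref{lem:user_scale}, and then to show that for large $\lambda$ both $\sigma^{(u)}_{f_\text{Levy}}$ and $\sigma^{(u)}_{f_\text{Quantile}}$ are bounded below by a constant strictly larger than $\sigma^{(u)}_{f_\text{arr,best}}$. Using $\sigma = \sqrt{2}\, b$ for $\text{Lap}(b)$, together with the fact that under user scaling $\widehat{m}_\text{UB}=m_\text{UB}$ and $\widehat{\overline{K}}=\lambda\overline{K}$, I first read off
\[
\sigma^{(u)}_{f_\text{arr,best}} \;=\; \frac{\sqrt{2}\,U}{\lambda\,\overline{K}\,\varepsilon},
\qquad
\sigma^{(u)}_{f_\text{Levy}} \;=\; \frac{2\sqrt{2}\,(b-a)}{\lambda\,\overline{K}\,\varepsilon},
\qquad
\sigma^{(u)}_{f_\text{Quantile}} \;=\; \frac{2\sqrt{2}\,(b'-a')}{\lambda\,\overline{K}\,\varepsilon},
\]
where the factor of $2$ in the last two expressions comes from the privacy budget for the mean having been reduced to $\varepsilon/2$, and where $b-a \le \min\{3\widehat{\tau},U\}$ with $\widehat{\tau}=U\sqrt{\log(2\lambda\overline{K}/\gamma)/(2m_\text{UB})}$.

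The Levy comparison is then essentially deterministic. Since $\widehat{\tau}\to\infty$ as $\lambda\to\infty$, there exists $\lambda_0$ such that $3\widehat{\tau}\ge U$ for every $\lambda\ge\lambda_0$; from this point on the construction in Algorithm~\ref{alg:interval} forces $[a,b]=[0,U]$, so $b-a=U$ and therefore $\sigma^{(u)}_{f_\text{Levy}} = 2\sigma^{(u)}_{f_\text{arr,best}}$, which is strictly larger than $\sigma^{(u)}_{f_\text{arr,best}}$. (More generally, the strict inequality $\sigma^{(u)}_{f_\text{Levy}}>\sigma^{(u)}_{f_\text{arr,best}}$ already holds as soon as $\widehat{\tau}>U/6$.)

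The Quantile half is where the randomness enters and is the main obstacle. I would argue that under the sample-scaling / user-scaling model, the array means $\overline{A}_1,\ldots,\overline{A}_{\lambda\overline{K}}$ are approximately i.i.d.\ $\mathcal{N}(\mu,\sigma^2/m_\text{UB})$ by the CLT on $m_\text{UB}$ clipped Gaussians, with a small correction from the point masses at $0$ and $U$. For the fixed rank $r=\lceil 2/\varepsilon\rceil$, standard extreme-value / order-statistic bounds for i.i.d.\ (sub-)Gaussians yield, with probability $1-o(1)$ as $\lambda\to\infty$,
\[
b' \;\ge\; \mu + c\,\frac{\sigma}{\sqrt{m_\text{UB}}}\sqrt{\log(\lambda\overline{K})},
\qquad
a' \;\le\; \mu - c\,\frac{\sigma}{\sqrt{m_\text{UB}}}\sqrt{\log(\lambda\overline{K})},
\]
for a universal constant $c>0$ (depending on $\varepsilon$ through $r$). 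Hence $b'-a'\ge 2c(\sigma/\sqrt{m_\text{UB}})\sqrt{\log(\lambda\overline{K})}$, and this quantity exceeds $U/2$ once
\[
\lambda \;>\; \frac{1}{\overline{K}}\exp\!\left(\frac{U^2 m_\text{UB}}{16 c^2\sigma^2}\right),
\]
giving $\sigma^{(u)}_{f_\text{Quantile}} > \sigma^{(u)}_{f_\text{arr,best}}$ with high probability. Combining this with the Levy bound above yields the claim for every $\lambda$ beyond the maximum of the two thresholds.

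The bulk of the technical work is the last step: justifying the Gaussian-tail lower bound on the $r$-th order statistic of the $\overline{A}_i$ when the underlying distribution is a clipped Gaussian rather than a true Gaussian. I would handle this by conditioning on the event that none of the samples attain the clipped endpoints (whose complement has probability $O(\lambda m_\text{UB} Q((U-\mu)/\sigma))$, negligible once $U$ is well into the tail of $\mathcal{N}(\mu,\sigma^2)$), invoking a Berry--Esseen estimate for the array-mean distribution, and then applying the standard fact that the $r$-th largest of $n$ i.i.d.\ $\mathcal{N}(0,1)$ variables concentrates around $\sqrt{2\log(n/r)}$. Everything else is routine bookkeeping with the sensitivities.
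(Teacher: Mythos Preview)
Your treatment of \textsc{Levy} is correct and matches the paper: once $\widehat{\tau}$ exceeds $U/3$ the private interval degenerates to $[0,U]$, and the halved privacy budget forces $\sigma^{(u)}_{f_\text{Levy}} = 2\,\sigma^{(u)}_{f_\text{arr,best}}$.

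The \textsc{Quantile} half, however, has a genuine gap. Your plan is to condition on the event that no sample attains the clipped endpoints $0$ or $U$, claiming its complement has probability $O\!\bigl(\lambda\, m_\text{UB}\,Q((U-\mu)/\sigma)\bigr)$. But this bound \emph{grows linearly in $\lambda$}, so the event you wish to condition on has probability tending to $0$, not to $1$; for the ITMS parameters $Q(\mu/\sigma)\approx 0.027$, so already for moderate $\lambda$ almost every array contains samples pinned at $0$. Conditioning on a vanishing-probability event cannot deliver a ``with high probability'' conclusion, and the Berry--Esseen step after conditioning is therefore vacuous. More broadly, the Gaussian extreme-value heuristic fails precisely at the boundary of $[0,U]$, which is where the $r$-th order statistic of the bounded array means actually sits.

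The paper's argument runs in the opposite direction: it \emph{exploits} the endpoint atoms rather than suppressing them. Since each sample equals $0$ with probability $Q(\mu/\sigma)>0$, every array mean satisfies $\Pr[\overline{A}_i=0]>0$, and hence $\Pr[\overline{A}_i\in[\delta,U-\delta]]\le p<1$ for a constant $p$ independent of $\lambda$. For the $\bigl(\widehat{\overline{K}}^{-1}t,\,1-\widehat{\overline{K}}^{-1}t\bigr)$-interquantile interval to lie inside $[\delta,U-\delta]$ one needs at least $\widehat{\overline{K}}-2t$ of the $\widehat{\overline{K}}=\lambda\overline{K}$ array means to fall in $[\delta,U-\delta]$; a union bound then gives probability at most $\binom{\lambda\overline{K}}{2t}\,p^{\lambda\overline{K}}$, which tends to zero because the binomial coefficient is polynomial in $\lambda$ while the second factor decays exponentially. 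This yields $b'-a'>U-2\delta>U/2$ with high probability and hence $\sigma^{(u)}_{f_\text{Quantile}}>\sigma^{(u)}_{f_\text{arr,best}}$. The repair to your argument is thus to use the point masses of the clipped-Gaussian sample law rather than trying to argue them away.
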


\begin{proof}
	First consider \textsc{Array-Averaging} under user scaling. Recall that $\Delta^{(u)}_{f_\text{arr, best}} = \frac{U}{\widehat{\overline{K}}} = \frac{U}{\lambda\cdot {\overline{K}}}$, where the last equality follows from Lemma \ref{lem:user_scale}. Hence, we have that $\sigma^{(u)}_{f_\text{arr, best}} = \frac{\sqrt{2}U}{\lambda{\overline{K}}\varepsilon}$.
	
	Now, consider \textsc{Levy} under user scaling. From \eqref{eq:deltalevy}, we have that 
	\[
	\sigma_{f_\text{Levy}}^{(u)}
	= \min\left\{\frac{6\sqrt{2}\cdot\tau}{\widehat{\overline{K}}\cdot \varepsilon}, \frac{2\sqrt{2}\cdot U}{\widehat{\overline{K}}\cdot \varepsilon}\right\} = \frac{2\sqrt{2}\cdot U}{\lambda{\overline{K}}\varepsilon},
	\]
	for $\lambda\geq \frac{\gamma}{2\overline{K}}\cdot e^{2m_\text{UB}/9}$ (see \eqref{eq:tau}). Here, in the last equality above, we once again invoke Lemma \ref{lem:user_scale} to obtain $$\widehat{\tau} =  U\cdot \sqrt{\frac{\log (2\lambda\overline{K}/\gamma)}{2m_\text{UB}}}.$$ Hence, for $\lambda\geq \frac{\gamma}{2\overline{K}}\cdot e^{2m_\text{UB}/9}$, we have that $\sigma^{(u)}_{f_\text{arr, best}}< \sigma^{(u)}_{f_\text{Levy}}$.
	
	Now consider the \textsc{$\varepsilon$-DependentQuantile} under user scaling. Fix some $\delta>0$ such that $\delta<U/4$. We claim that for large enough $\lambda$, the true $\left({\widehat{\overline{K}}}^{-1}\cdot\left \lceil \frac{2}{\varepsilon}\right \rceil ,1-{\widehat{\overline{K}}}^{-1}\cdot\left \lceil \frac{2}{\varepsilon}\right \rceil\right)$-interquantile interval of the samples contains $[\delta,U-\delta]$, with high probability. To this end, let $t := \left \lceil \frac{2}{\varepsilon}\right \rceil$ and let $B$ denote the event that $\left({\widehat{\overline{K}}}^{-1}\cdot{t},1-{\widehat{\overline{K}}}^{-1}\cdot{t}\right)$-interquantile interval is contained in $[\delta,U-\delta]$. We shall now show that $\Pr[B]$ decays to zero as $\lambda$ increases to infinity. Observe that
	\begin{align}
	&\Pr\left[B\right] \notag\\
	&\leq \Pr\left[\exists\ {\widehat{\overline{K}}}-2t \text{ array means all lying in } [\delta,U-\delta]\right]. \label{eq:inter}
	\end{align}
Now, let $B_i$ denote the event that $\widehat{\overline{A}}_i$, for $i\in [\widehat{\overline{K}}]$ lies in $[\delta,U-\delta]$, and let $i^\star$ denote that array index $i$ that maximizes $\Pr[B_i]$. Note that 
\[
\Pr[B_{i^\star}]\leq 1-\max\left\{\Pr[\widehat{\overline{A}}_{i^\star} = 0], \Pr[\widehat{\overline{A}}_{i^\star} = U]\right\}.
\]

Further, we have that
\begin{align*}
\Pr[\widehat{\overline{A}}_{i^\star} = 0] &= \left(\Pr[\widehat{S}_1^{(1)} = 0]\right)^{w(\widehat{A}_{i^\star})}\\
&= (1-Q(\mu/\sigma))^{w(\widehat{A}_{i^\star})}>0.
\end{align*}
where we use the fact that the samples $\widehat{S}_j^{(\ell)}$, $\ell\in [\widehat{L}]$, $j\in [\widehat{m}_\ell]$ are drawn i.i.d. according to the ``projected'' Gaussian distribution described earlier. Since $w(\widehat{A}_{i^\star})>0$, we get that $\Pr[\widehat{\overline{A}}_{i^\star} = 0]>0$ and hence that $\Pr[B_{i^\star}]<1$.

Employing a union bound argument to upper bound the probability in \eqref{eq:inter}, we get that
\begin{align*}
	\Pr[B]&\leq \binom{{\widehat{\overline{K}}}}{{\widehat{\overline{K}}}-2t}\cdot \prod_{i=1}^{\widehat{\overline{K}}}\Pr\left[B_i\right]\\
	&\leq \binom{{\widehat{\overline{K}}}}{{\widehat{\overline{K}}}-2t}\cdot \big(\Pr\left[B_{i^\star}\right]\big)^{\widehat{\overline{K}}} = \binom{{\widehat{\overline{K}}}}{2t}\cdot \big(\Pr\left[B_{i^\star}\right]\big)^{\widehat{\overline{K}}}
\end{align*}
where the second inequality holds by the definition of $i^\star$.
Since $\binom{{\widehat{\overline{K}}}}{2t}$ grows polynomially in $\lambda$, for a fixed $t$ (see Lemma \ref{lem:user_scale}), we have that for any fixed $\delta>0$, and for a fixed, small $\beta\in (0,1)$, there exists $\lambda_0$, such that for all $\lambda\geq \lambda_0$, we have that $\Pr[B]<\beta$.


Therefore, for any arbitrarily small $\beta > 0$ and for $\lambda\geq \lambda_0$, we have that with probability at least $1-\beta$,
\[
\sigma^{(u)}_{f_\text{Quantile}}
\geq \frac{2\sqrt{2}\cdot(U-2\delta)}{\widehat{\overline{K}}\cdot \varepsilon}> \frac{\sqrt{2}\cdot U}{\widehat{\overline{K}}\cdot \varepsilon} =  \sigma^{(u)}_{f_\text{arr, best}},
\]
where the second inequality holds since $\delta< U/4$.

Hence, by picking $\lambda\geq \max\left\{\lambda_0,\frac{\gamma}{2\overline{K}}\cdot e^{2m_\text{UB}/9}\right\}$, we get $\sigma^{(u)}_{f_\text{arr, best}}< \min\left\{ \sigma^{(u)}_{f_\text{Levy}}, \sigma^{(u)}_{f_\text{Quantile}}\right\},$ with probability at least $1-\beta$.
\end{proof}

All that remains is to show that the \textsc{FixedQuantile} algorithm performs better than \textsc{Array-Averaging}. By arguments similar to those above, we observe that since the \textsc{FixedQuantile} subroutine eliminates a much larger fraction of data samples in the computation of $[a',b']$, with high probability, we have that $b'-a'$ is small, and hence the amount of noise added in this case is lower than that for \textsc{Array-Averaging}. Hence, in general, for datasets with a large number of users, we recommend using \textsc{FixedQuantile}.

Until now, we have worked with either real-world (ITMS) datasets or with datasets with i.i.d. speed samples, both of which exhibit typical behaviour in that most of the speed samples take non-extreme values, with high probability. In the sections that follow, we analyze ``worst-case'' datasets that maximize the error in estimation, for the general setting of pseudo-user creation-based algorithms that clip the number of samples contributed by users. For this setting, we provide explicit bounds on worst-case error and choices for the lengths of arrays in order to minimize the worst-case error.

\section{On Minimax Error Bounds for Pseudo-User Creation-Based Algorithms}
\label{sec:minimax}
In this section, we present a ``minimax'' characterization of the \emph{total} estimation error due to clipping and noise in an algorithm that is based on the creation of arrays, or pseudo-users; in other words, we present a characterization of the total  estimation error of \textsc{Array-Averaging} under the ``best'' choice of $m_\text{UB}$, on the ``worst'' dataset. As an immediate corollary, we obtain an upper bound on the total estimation error of \textsc{Array-Averaging} under the ``best'' choice of $m_\text{UB}$ on \emph{any} dataset. We emphasize, though, that this choice of $m_\text{UB}$ is only \emph{minimax} optimal and not necessarily optimal in a typical (or \emph{average-case}) setting. We mention that a general minimax theory for balancing errors due to estimation and privacy was left open in \cite{wassermanzhou} (see also \cite{minimaxopt}).

In what follows, we assume that the arrays $A_i$, $i\in [\overline{K}]$ are fully filled and use $\overline{K} = \frac{\sum_{\ell=1}^{L}\min \left \{m_{\ell}, m_\text{UB} \right \}}{m_\text{UB}}$ in all subsequent equations. For ease of reading, we set $m=m_\text{UB}$.

For a given dataset $\mathcal{D}'$, following the definition of the \emph{empirical loss} corresponding to a mechanism in \cite[Sec. 1.1]{primer_stat}, we let $E^{(\varepsilon)}(\mathcal{D}',m)$ to be the overall error, due to clipping and due to noise addition, incurred by the \textsc{Array-Averaging} algorithm, where
\[
E^{(\varepsilon)}(\mathcal{D}',m) = \left \lvert f_\text{arr}(\mathcal{D}') - f(\mathcal{D}')\right\rvert + \tilde{\Delta}_{f_\text{arr}}/\varepsilon.
\]
Note that in the above expression, if $Z\sim \text{Lap}(\tilde{\Delta}_{f_\text{arr}}/\varepsilon)$ is the random variable that represents the noise added for privacy, we have that $\tilde{\Delta}_{f_\text{arr}}/\varepsilon = \E[|Z|]$.
 
In what follows, we set
$
E_1(\mathcal{D}',m):= \left \lvert f_\text{arr}(\mathcal{D}') - f(\mathcal{D}')\right\rvert
$
to denote the error due to clipping the number of samples per user, and 
\begin{equation}
	E_2^{(\varepsilon)}(m):= \tilde{\Delta}_{f_\text{arr}}/\varepsilon = \frac{ Um}{\varepsilon\cdot \sum_{\ell=1}^L \min\{m_\ell, m\}}
	\label{eq:e2}
\end{equation}
to denote the expected absolute value of noise added to guarantee $\varepsilon$-DP. 
Our intention is to minimize the maximum (or worst-case) error $E^{(\varepsilon)}$ over all datasets. Let $E^{(\varepsilon)}(m):= \max_{\mathcal{D}'} E(\mathcal{D}',m)$ and let
\begin{align}
	\label{eq:minimax1}
	E^{(\varepsilon)}:= \min_{m_\star\leq m\leq m^\star}E^{(\varepsilon)}(m).
\end{align}
We then set
\begin{align}
	\label{eq:uboptimnew}
	{m}^{(\varepsilon)}\in \arg\min_{m_\star\leq m\leq m^\star} E^{(\varepsilon)}(m),
\end{align}
and note that the right-hand side above is a set of cardinality potentially larger than $1$. The next lemma exactly characterizes the worst-case, or maximum, error $E_1(\mathcal{D}',m)$, over all datasets $\mathcal{D}'$. For $\ell\in [L]$, let $\Gamma_\ell:= \min\{m_\ell,m\}$.
\begin{lemma}
	\label{lem:e1}
	We have that 
	\[
	\max_\mathcal{D'} E_1(\mathcal{D}',m)= U\cdot \left(1-\frac{\sum_\ell \Gamma_\ell}{\sum_\ell m_\ell}\right).
	\]
\end{lemma}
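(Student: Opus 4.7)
The plan is to rewrite $f_{\text{arr}}(\mathcal{D}')$ and $f(\mathcal{D}')$ as ratios of a kept-sample sum and a dropped-sample sum, then maximize the resulting rational expression over all datasets whose samples lie in $[0,U]$. Concretely, since the arrays are assumed fully filled, $\overline{K}\cdot m = \sum_\ell \Gamma_\ell$, and since under \textsc{BestFit} the arrays simply hold the first $\Gamma_\ell$ samples contributed by each user $\ell$, I would introduce $T_k := \sum_\ell \sum_{j=1}^{\Gamma_\ell} S_j^{(\ell)}$ and $T_d := \sum_\ell \sum_{j=\Gamma_\ell+1}^{m_\ell} S_j^{(\ell)}$, together with $M_k := \sum_\ell \Gamma_\ell$ and $M_d := \sum_\ell (m_\ell - \Gamma_\ell)$. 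Then $f_{\text{arr}}(\mathcal{D}') = T_k/M_k$ and $f(\mathcal{D}') = (T_k + T_d)/(M_k + M_d)$.

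Combining the two fractions over a common denominator gives the identity
\[
f_{\text{arr}}(\mathcal{D}') - f(\mathcal{D}') = \frac{T_k M_d - T_d M_k}{M_k (M_k + M_d)}.
\]
Since each $S_j^{(\ell)} \in [0,U]$, we have the independent box constraints $T_k \in [0, U M_k]$ and $T_d \in [0, U M_d]$, so the numerator is a bilinear (in fact separable affine) function of $(T_k, T_d)$ whose maximum absolute value is $U M_k M_d$, attained at the two extreme corners $(T_k, T_d) = (U M_k, 0)$ and $(0, U M_d)$. Substituting back, $|f_{\text{arr}}(\mathcal{D}') - f(\mathcal{D}')| \le \frac{U M_d}{M_k + M_d} = U\bigl(1 - \frac{\sum_\ell \Gamma_\ell}{\sum_\ell m_\ell}\bigr)$, which is the claimed upper bound.

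For achievability, I would exhibit the explicit dataset realizing the corner $(T_k, T_d) = (U M_k, 0)$: set $S_j^{(\ell)} = U$ for $1 \le j \le \Gamma_\ell$ and $S_j^{(\ell)} = 0$ for $\Gamma_\ell + 1 \le j \le m_\ell$, for every user $\ell \in [L]$. Plugging into the identity above yields $f_{\text{arr}} - f = \frac{U M_d}{M_k + M_d}$, matching the upper bound in absolute value and concluding the proof.

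I do not anticipate a genuine obstacle here; the only minor care needed is to justify that, under \textsc{BestFit}, the kept samples are exactly the first $\Gamma_\ell$ samples of each user (so that $f_{\text{arr}}$ really has the form $T_k/M_k$ with $T_k$ depending only on the $\Gamma_\ell$ chosen indices), a fact that is immediate from the algorithm's description in Section~\ref{sec:alg}. The full-fill assumption $\overline{K}m = M_k$ stated just before the lemma removes any ambiguity caused by partially filled arrays.
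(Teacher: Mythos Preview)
Your proposal is correct and takes essentially the same approach as the paper: both decompose $f_{\text{arr}}-f$ into a kept-sample term and a dropped-sample term (your $T_k,T_d$ are exactly the paper's $\sum_{\ell}\sum_{j\le\Gamma_\ell}S_j^{(\ell)}$ and $\sum_{\ell<\ell^\star}\sum_{j>m}S_j^{(\ell)}$), observe that the expression is affine in each with opposite-sign coefficients, and maximize at the same two extreme corners, both of which yield $U(1-\sum_\ell\Gamma_\ell/\sum_\ell m_\ell)$. Your common-denominator identity is a slightly more compact way of writing the paper's intermediate expression, but the argument is otherwise identical.
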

\begin{proof}
	First, recall that $m_1\geq m_2\geq \ldots m_L$; let $m_{\ell^\star}$ denote the smallest index $\ell$ such that $m_\ell< m$. Then,
	\begin{align*}
		&E_1(\mathcal{D}',m)\\
		&= \Bigg \lvert \frac{1}{\overline{K}m}\sum_{\ell< \ell^\star}\sum_{j=1}^m S_j^{(\ell)} + \frac{1}{\overline{K}m}\sum_{\ell\geq \ell^\star}\sum_{j=1}^{m_\ell} S_j^{(\ell)} \\
		&\ \ \ \ \ \ \ \ \ \  - \bigg(\frac{1}{\sum_\ell m_\ell} \sum_{\ell<\ell^\star}  \sum_{j=1}^{m_\ell}S_j^{(\ell)}+ \frac{1}{\sum_\ell m_\ell}  \sum_{\ell\geq \ell^\star} \sum_{j=1}^{m_\ell}S_j^{(\ell)}\bigg) \Bigg\rvert\\
		&= \Bigg \lvert \bigg(\frac{1}{\overline{K}m}-\frac{1}{\sum_\ell m_\ell}\bigg)\cdot \sum_{\ell=1}^L \sum_{j=1}^{\Gamma_\ell} S_j^{(\ell)} - \frac{1}{\sum_\ell m_\ell} \sum_{\ell<\ell^\star}\sum_{j>m} S_j^{(\ell)} \Bigg \rvert.
	\end{align*}
	Now, since each of the two terms in the maximization above is non-negative, with $S_j^{(\ell)}\in [0,U]$, for all $\ell\in [L]$ and $j\in [m_\ell]$, we get that 
	\begin{align}
		&\max_{\mathcal{D}'} E_1(\mathcal{D}',m) \notag\\
		&=\max\Bigg\{ \bigg(\frac{U}{\overline{K}m}-\frac{U}{\sum_\ell m_\ell}\bigg) \sum_{\ell=1}^L \Gamma_\ell, \frac{U}{\sum_\ell m_\ell} \sum_{\ell<\ell^\star} (m_\ell-m)^+  \Bigg\}. \label{eq:interminimax}
	\end{align}
	In \eqref{eq:interminimax}, the first expression on the right is attained when $S_j^{(\ell)} = U$, for all $\ell\in [L]$ and $j\in [\Gamma_\ell]$, and $S_j^{(\ell)} = 0$, otherwise. Analogously, the second expression on the right is attained when $S_j^{(\ell)} = U$, for all $\ell<\ell^\star$ and $j>m$, and $S_j^{(\ell)} = 0$, otherwise.
	We use the notation $(c)^+$ to denote $\max\{0,c\}$, for $c\in \mathbb{R}$. Next, observe that $\sum_\ell (m_\ell-m)^+ = \sum_\ell m_\ell - \sum_\ell \Gamma_\ell$. Plugging this into \eqref{eq:interminimax}, we obtain that 
	\begin{align*}
		&\max_{\mathcal{D}'} E_1(\mathcal{D}',m)\\
		&= U\cdot \max\Bigg\{ \bigg(\frac{1}{\overline{K}m}-\frac{1}{\sum_\ell m_\ell}\bigg) \sum_{\ell=1}^L \Gamma_\ell, 1-\frac{\sum_\ell \Gamma_\ell}{\sum_\ell m_\ell}  \Bigg\}\\
		&= U\cdot \left(1-\frac{\sum_\ell \Gamma_\ell}{\sum_\ell m_\ell}\right),
	\end{align*}
	where in the last equality we use the fact that $\overline{K}m = \sum_\ell \Gamma_\ell$.
\end{proof}
Given Lemma \ref{lem:e1}, our objective \eqref{eq:minimax1} yields
\begin{align}
	{E}^{(\varepsilon)}= \min_{m_\star\leq m\leq m^\star} \left( U\cdot \left(1-\frac{\sum_\ell \Gamma_\ell}{\sum_\ell m_\ell}\right)+ \frac{Um}{\varepsilon\cdot \sum_{\ell=1}^L \Gamma_\ell}\right).
	\label{eq:inter2}
\end{align}

Moreover, since 
\begin{align*}
	E^{(\varepsilon)} &= \min_{m_\star\leq m\leq m^\star} \max_{\mathcal{D}'} E(\mathcal{D}',m)\\ &\geq \max_{\mathcal{D}'} \min_{m_\star\leq m\leq m^\star} E(\mathcal{D}',m)
	\geq \min_{m_\star\leq m\leq m^\star} E(\overline{\mathcal{D}},m),
\end{align*}
for \emph{any} dataset $\overline{D}$, we obtain that the total estimation error of the ``best'' pseudo-user based algorithm (via a choice of $m = m_\text{UB}$) for $\overline{D}$, is upper bounded by $E^{(\varepsilon)}$.

Figure \ref{fig:m_e} shows a plot of a minimizer $m^{(\varepsilon)}$ that is the smallest $m\in [m_\star,m^\star]$ that optimizes \eqref{eq:inter2} for the ITMS dataset. We observe that this minimizer appears to be a staircase function (of $\varepsilon$) and is non-decreasing in $\varepsilon$. In what follows, we provide theoretical proofs of these and other properties of the minimizer $m^{(\varepsilon)}$.

\begin{figure}
	\centering
	\includegraphics[width=\linewidth]{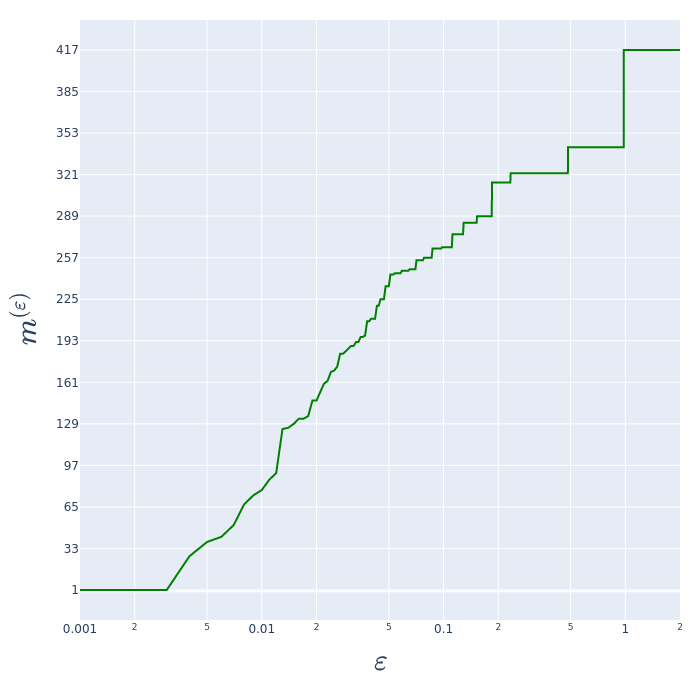}
	\caption{Plot showing the values of $m^{(\varepsilon)}$ obtained by solving \eqref{eq:inter2} for the ITMS dataset with $0.001\le \varepsilon\le 2$. Here, the $\varepsilon$ axis is shown on a log-scale.}
	\label{fig:m_e}
\end{figure}
\begin{lemma}
	For any $\varepsilon>0$, we have that there exists $m^{(\varepsilon)} \in \{m_1,\ldots,m_L\}$.
\end{lemma}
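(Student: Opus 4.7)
The plan is to exploit the piecewise-linear structure of the mapping $m \mapsto \sum_\ell \Gamma_\ell(m)$ and show that the objective in \eqref{eq:inter2} is concave on each piece. Since a concave function on a closed interval attains its minimum at an endpoint, and the endpoints of these pieces are exactly the distinct values in $\{m_1,\ldots,m_L\}$, this will force some minimizer to lie in $\{m_1,\ldots,m_L\}$.

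First, I would sort the users so that $m_1\geq m_2\geq\cdots\geq m_L$ and decompose $[m_\star,m^\star]=[m_L,m_1]$ into the sub-intervals $[m_{i+1},m_i]$ for $i\in\{1,\ldots,L-1\}$. On such a sub-interval, every $j\leq i$ satisfies $m_j\geq m$ and every $j>i$ satisfies $m_j\leq m$, so
\[
\sum_{\ell=1}^L \Gamma_\ell(m) \;=\; i\cdot m \;+\; D_i, \qquad D_i:=\sum_{j>i} m_j,
\]
where $D_i$ is a constant depending only on the sub-interval. Substituting into \eqref{eq:inter2}, the objective $g(m)$ reduces on each sub-interval to an affine function of $m$ plus the single rational term $\tfrac{U m}{\varepsilon(i m+D_i)}$.

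Next, I would verify concavity of $g$ on each sub-interval. The affine part contributes nothing to the second derivative, and a direct computation shows
\[
\frac{d^{2}}{dm^{2}}\!\left[\frac{m}{i m+D_i}\right] \;=\; -\frac{2\,i\,D_i}{(i m+D_i)^{3}} \;\leq\; 0,
\]
since $i,D_i\geq 0$ and the denominator is strictly positive on $[m_\star,m^\star]$ (using $m\geq m_\star\geq 1$). Hence $g$ is concave on each sub-interval, and its minimum there is attained at one of the endpoints $m_i$ or $m_{i+1}$. Comparing the finitely many candidate values $g(m_1),\ldots,g(m_L)$ and picking any global minimizer then yields $m^{(\varepsilon)}\in\{m_1,\ldots,m_L\}$.

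The argument has no real obstacle: the only thing to check is the sign of the second derivative, after which the conclusion is structural. A minor subtlety is ensuring that the denominator $i m + D_i$ does not vanish on each relevant interval, but this is immediate from $m\geq m_\star\geq 1$ together with $D_i\geq 0$ (the edge case $i=L$ with $D_L=0$ occurs only at the single point $m=m_L$, which is already in $\{m_1,\ldots,m_L\}$ and hence poses no issue).
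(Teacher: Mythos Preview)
Your proposal is correct and follows essentially the same approach as the paper: decompose $[m_\star,m^\star]$ into sub-intervals at the values $m_1,\ldots,m_L$, observe that on each piece $\sum_\ell\Gamma_\ell$ is affine in $m$, verify that the objective is concave on each piece (the $E_1$ part is affine and the $E_2^{(\varepsilon)}$ part has the form $m/(c_1+c_2m)$ with negative second derivative), and conclude that a minimizer lies at an endpoint. The only cosmetic difference is that you combine the two terms into a single $g(m)$ before differentiating, whereas the paper treats $E_1$ and $E_2^{(\varepsilon)}$ separately; the content is identical.
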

\begin{proof}
	We claim that for any $\varepsilon>0$, a minimizer of \eqref{eq:inter2} occurs at some $m_k$, for $k\in [L]$. Indeed, observe that for any $m\in [m_t,m_{t+1}]$, for $t\in [L-1]$, we have $\sum_\ell \Gamma_\ell = \sum_{\ell=1}^t m_\ell + (L-t)m$. This hence implies that
	\[
	E_1(m):=\max_\mathcal{D'} E_1(\mathcal{D}',m)= U\cdot \left(1-\frac{\sum_\ell \Gamma_\ell}{\sum_\ell m_\ell}\right)
	\]
	is concave in $m$, for $m\in [m_t,m_{t+1}]$.
	
	Likewise, we have
	\[
	E_2^{(\varepsilon)}(m) = \frac{ Um}{\varepsilon\cdot \sum_{\ell=1}^L \min\{m_\ell, m\}} = \frac{m}{c_1 + c_2m},
	\]
	for some constants $c_1, c_2>0$. It can be verified that 
	\[
	\frac{\d^2 E_2^{(\varepsilon)}}{\d m^2} = \frac{-2c_1c_2(c_1+c_2m)}{(c_1+c_2m)^4}<0,
	\]
	implying that $E_2^{(\varepsilon)}(m)$ is also concave in $m$, for $m\in [m_t,m_{t+1}]$.
	
	Putting everything together, we obtain that $E^{(\varepsilon)} = E_1+E_2^{(\varepsilon)}$ is concave in $m$, for $m\in [m_t,m_{t+1}]$, thereby showing that a minimum of $E^{(\varepsilon)}$ occurs at the boundary of the interval $[m_t,m_{t+1}]$. In other words, for $m\in (m_t,m_{t+1})$, we have that $E^{(\varepsilon)}(m) \geq \min\{E^{(\varepsilon)}(m_t),E^{(\varepsilon)}(m_{t+1})\}$, hence showing what we set out to prove.
\end{proof}
The above lemma shows that a brute-force numerical optimization of \eqref{eq:inter2} can be carried out by simply letting $m$ take values in the set $\{m_1,\ldots,m_L\}$.
\begin{lemma}
	\label{lem:minimax1}
	We have that $m^{(\varepsilon)}$ is non-decreasing in $\varepsilon$.
\end{lemma}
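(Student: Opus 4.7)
The plan is to write $E^{(\varepsilon)}(m) = E_1(m) + \tfrac{1}{\varepsilon}\,g(m)$, where $g(m) := \tfrac{Um}{\sum_\ell \Gamma_\ell}$, so that only the second term depends on $\varepsilon$ and does so through the scalar factor $1/\varepsilon$. I will first establish two monotonicity facts in $m$: (a) $E_1(m)$ is non-increasing, and (b) $g(m)$ is non-decreasing. Fact (a) is immediate since $\sum_\ell \min(m_\ell, m)$ is plainly non-decreasing in $m$, and $\sum_\ell m_\ell$ is constant. For (b), I would argue pointwise: for $m < m'$ and any $\ell$, a short case analysis (on whether $m_\ell \leq m$, $m < m_\ell \leq m'$, or $m_\ell > m'$) shows $m\,\min(m_\ell, m') \leq m'\,\min(m_\ell, m)$, and summing over $\ell$ rearranges to $g(m) \leq g(m')$.

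With (a) and (b) in hand, the rest is a standard exchange (Topkis-style) argument. Fix $\varepsilon_1 < \varepsilon_2$ and let $m_i := m^{(\varepsilon_i)}$ denote the smallest minimizer of $E^{(\varepsilon_i)}$, for $i=1,2$. Suppose, towards a contradiction, that $m_1 > m_2$. Writing out the two optimality inequalities $E^{(\varepsilon_1)}(m_1) \leq E^{(\varepsilon_1)}(m_2)$ and $E^{(\varepsilon_2)}(m_2) \leq E^{(\varepsilon_2)}(m_1)$ and cancelling the common $E_1$ contributions gives the sandwich
\[
\tfrac{1}{\varepsilon_1}\bigl(g(m_1)-g(m_2)\bigr)\;\leq\; E_1(m_2)-E_1(m_1) \;\leq\; \tfrac{1}{\varepsilon_2}\bigl(g(m_1)-g(m_2)\bigr).
\]
By (a) and (b) applied to $m_1 > m_2$, each of the three quantities is non-negative. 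If $g(m_1) > g(m_2)$ strictly, the outer inequality collapses to $1/\varepsilon_1 \leq 1/\varepsilon_2$, contradicting $\varepsilon_1 < \varepsilon_2$.

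The only remaining subtlety, which I expect to be the one genuinely delicate point, is the boundary case $g(m_1) = g(m_2)$, where the sandwich carries no information about the $\varepsilon_i$'s. Here I would invoke the tie-breaking convention stated with Figure \ref{fig:m_e}, namely that $m^{(\varepsilon)}$ denotes the \emph{smallest} element of the $\arg\min$ set: if $g(m_1) = g(m_2)$, then the sandwich forces $E_1(m_1) = E_1(m_2)$, whence $E^{(\varepsilon_1)}(m_2) = E^{(\varepsilon_1)}(m_1)$. Thus $m_2$ is also a minimizer at $\varepsilon_1$, and since $m_2 < m_1$, this contradicts the minimality of $m_1$. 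In either case we conclude $m_1 \leq m_2$, establishing the claimed monotonicity.
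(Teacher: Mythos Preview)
Your proof is correct and rests on the same key observation as the paper's, namely that $g(m)=Um/\sum_\ell \Gamma_\ell$ is non-decreasing in $m$ on $[m_\star,m^\star]$. The packaging differs slightly: the paper writes $E^{(\varepsilon_2)}(m)=E^{(\varepsilon_1)}(m)-\tfrac{\varepsilon_2-\varepsilon_1}{\varepsilon_1\varepsilon_2}\,g(m)$ and shows directly that $E^{(\varepsilon_2)}(m)>E^{(\varepsilon_2)}(m^{(\varepsilon_1)})$ for all $m<m^{(\varepsilon_1)}$, using strict monotonicity of $g$ to get the strict inequality without any tie-breaking; you instead run a Topkis-style exchange/sandwich argument by contradiction. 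Two small remarks: your fact~(a) about $E_1$ is true but not actually needed for the contradiction, and your tie-breaking case $g(m_1)=g(m_2)$ is in fact vacuous here, since your own pointwise computation shows that whenever $m_2<m_1$ and $m_2\geq m_\star$ there is some $\ell$ with $m_\ell\leq m_2$, making $m_2\min(m_\ell,m_1)<m_1\min(m_\ell,m_2)$ strictly and hence $g(m_2)<g(m_1)$.
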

\begin{proof}
	We need to prove that $m^{(\varepsilon_2)}\geq m^{(\varepsilon_1)}$, for $\varepsilon_2>\varepsilon_1$. To this end, it is sufficient to prove that for all $m<m^{(\varepsilon_1)}$, it is true that $E^{(\varepsilon_2)}(m)> E^{(\varepsilon_2)}(m^{(\varepsilon_1)})$. Now, for any $m\in [m_\star,m^\star]$, observe that
	\begin{align}
		&E^{(\varepsilon_2)}(m) \notag\\&= \left( U\cdot \left(1-\frac{\sum_\ell \Gamma_\ell}{\sum_\ell m_\ell}\right)+ \frac{ Um}{\varepsilon_2\cdot \sum_{\ell=1}^L \min\{m_\ell, m\}}\right)\notag\\
		&= E^{(\varepsilon_1)}(m)+\frac{Um}{\varepsilon_2\cdot \sum_{\ell=1}^L \min\{m_\ell, m\}}-\frac{ Um}{\varepsilon_1\cdot \sum_{\ell=1}^L \min\{m_\ell, m\}} \notag\\
		&= E^{(\varepsilon_1)}(m)-\frac{ Um(\varepsilon_2-\varepsilon_1)}{\varepsilon_1\varepsilon_2\cdot \sum_{\ell=1}^L \min\{m_\ell, m\}} \label{eq:minimaxinter2}
	\end{align}
Let $g(m):= \frac{ Um(\varepsilon_2-\varepsilon_1)}{\varepsilon_1\varepsilon_2\cdot \sum_{\ell=1}^L \min\{m_\ell, m\}}$. Rewriting $g(m)$ as
\[
g(m) = \frac{ U(\varepsilon_2-\varepsilon_1)}{\varepsilon_1\varepsilon_2\cdot \frac{\sum_{\ell=1}^L \min\{m_\ell, m\}}{m}},
\]
we see that $g$ is an increasing function of $m$. Thus, from \eqref{eq:minimaxinter2}, we obtain that for $m<m^{(\varepsilon_1)}$,
\begin{align*}
	&E^{(\varepsilon_2)}(m)- E^{(\varepsilon_2)}(m^{(\varepsilon_1)})\\
	&=\left(E^{(\varepsilon_1)}(m)-g(m)\right) - \left(E^{(\varepsilon_1)}(m^{(\varepsilon_1)})-g(m^{(\varepsilon_1)})\right)\\
	&= \left(E^{(\varepsilon_1)}(m)-E^{(\varepsilon_1)}(m^{(\varepsilon_1)})\right)+\left(g(m^{(\varepsilon_1)})-g(m)\right)>0,
\end{align*}
where in the last inequality, we have used the definition of $m^{(\varepsilon_1)}$ and the fact that $g(m)$ is increasing in $m$.
\end{proof}

Furthermore, the following lemma also holds.
\begin{lemma}
	\label{lem:minimax2}
	For $\varepsilon = \frac{m_\star}{L\cdot \sum_\ell m_\ell}$, we have that $m^{(\varepsilon)} = m_\star$.
	
	Similarly, for $\varepsilon = \left(\frac{\sum_\ell m_\ell}{Lm_\star}\right)^2$, we have that $m^{(\varepsilon)} = m^\star$.
\end{lemma}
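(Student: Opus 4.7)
The starting point is the preceding lemma: a minimizer of $E^{(\varepsilon)}(m)$ always lies in $\{m_1,\ldots,m_L\}$, so for each of the two specified values of $\varepsilon$ it suffices to verify that the claimed minimizer beats every $m_k$. Throughout, let $G(m):=\sum_{\ell}\min\{m_\ell,m\}$ and $S:=\sum_\ell m_\ell$; in particular, $G(m_\star)=Lm_\star$ and $G(m^\star)=S$.

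For the first claim, I would begin by subtracting and clearing denominators to show that $E^{(\varepsilon_0)}(m_\star)\le E^{(\varepsilon_0)}(m_k)$, for $\varepsilon_0=m_\star/(LS)$ and any $k$ with $m_k>m_\star$, is equivalent to
\[
m_\star\,G(m_k)\bigl(G(m_k)-Lm_\star\bigr)\;\le\; S^2\bigl(Lm_k-G(m_k)\bigr).
\]
Since $G(m_k)\le S$, it suffices to prove $m_\star(G(m_k)-Lm_\star)\le S(Lm_k-G(m_k))$, or equivalently $(Lm_k-G(m_k))/(G(m_k)-Lm_\star)\ge m_\star/S$. For this I would use two structural bounds driven by the extreme user $\ell=L$ with $m_L=m_\star$: first, $Lm_k-G(m_k)=\sum_{\ell>k}(m_k-m_\ell)\ge m_k-m_\star$ by isolating the term $\ell=L$; second, $G(m_k)-Lm_\star\le L(m_k-m_\star)$ since $\min\{m_\ell,m_k\}\le m_k$. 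Their ratio is then at least $1/L\ge m_\star/S$, the last step being $S\ge Lm_\star$.

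For the second claim, a parallel computation shows that $E^{(\varepsilon_1)}(m^\star)\le E^{(\varepsilon_1)}(m_k)$, for $\varepsilon_1=(S/(Lm_\star))^2$ and any $k$ with $m_k<m^\star$, is equivalent to
\[
S^2\,G(m_k)\bigl(S-G(m_k)\bigr)\;\ge\; L^2 m_\star^2\bigl(m^\star G(m_k)-m_k S\bigr).
\]
On the right I would use $G(m_k)\le S$ to bound $m^\star G(m_k)-m_k S\le (m^\star-m_k)S$; on the left I would chain $S\ge Lm_\star$, $G(m_k)\ge Lm_\star$, and the dual identity $S-G(m_k)=\sum_{\ell\le k}(m_\ell-m_k)\ge m^\star-m_k$, where the last step isolates the user $\ell=1$ with $m_1=m^\star$. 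Together these produce $SG(m_k)(S-G(m_k))\ge L^2 m_\star^2(m^\star-m_k)$; multiplying by $S$ yields the required inequality.

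The main obstacle is choosing bounds that are both tight enough and combinable: a uniform relaxation, such as replacing $G(m_k)-Lm_\star$ by $S-Lm_\star$ or $G(m_k)$ by $Lm_k$ everywhere, loses enough information that the resulting sufficient condition actually fails on extremal datasets (for instance when one user has $m^\star$ and all others sit at $m_\star$). The essential idea in both directions is to single out the contribution of the user attaining the extreme value---$m_L=m_\star$ in the first case, $m_1=m^\star$ in the second---which is precisely what produces the crucial factors $1/L$ and $m^\star-m_k$ on the favourable side of the inequality.
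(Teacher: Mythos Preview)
Your argument is correct. Both parts go through cleanly: the algebraic reductions are right, the structural bounds you isolate ($Lm_k-G(m_k)\ge m_k-m_\star$ and $S-G(m_k)\ge m^\star-m_k$, each obtained by singling out the extremal user) are exactly what is needed, and the chaining to close each inequality is sound.

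Your route, however, differs from the paper's. You fix the candidate minimizer and compare $E^{(\varepsilon)}$ at that point directly against $E^{(\varepsilon)}(m_k)$ for every other $m_k$, leaning on the preceding lemma to restrict attention to the finite set $\{m_1,\dots,m_L\}$. The paper instead proves a stronger monotonicity statement: for $\varepsilon=\varepsilon_0$ it shows $E^{(\varepsilon)}(m+1)>E^{(\varepsilon)}(m)$ for \emph{every} integer $m\in[m_\star,m^\star-1]$ (and the reverse inequality for $\varepsilon=\varepsilon_1$), by comparing the increment $\alpha^{(\varepsilon)}(m)$ of the noise term against the increment $\beta(m)$ of the clipping term. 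The paper's key bounds---$\sum_{\ell\in S^c}m_\ell\ge m_\star$ and $|S|<L$ in one direction, $\gamma_m\ge Lm_\star$ and $|S|\ge 1$ in the other---are driven by the same ``extremal user'' observation you exploit, so the two arguments are close in spirit. What the paper's version buys is the full monotonicity of $E^{(\varepsilon)}$ on integers, which feeds directly into the subsequent corollary via Lemma~\ref{lem:minimax1}; what your version buys is a shorter, more self-contained calculation that never leaves the set $\{m_k\}$.
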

\begin{proof}
	We shall first prove the first statement of the lemma. The second statement relies on very similar arguments and we shall sketch the key details later. 
	
	Now, for the first statement, it suffices to show that for $\varepsilon = \frac{m_\star}{L\cdot \sum_\ell m_\ell}$, we have that $E^{(\varepsilon)}(m+1)>E^{(\varepsilon)}(m)$, for all $m\in [m_\star,m^\star-1]$. It hence follows that the minimizer of $E^{(\varepsilon)}(m)$ occurs at the least allowed value of $m$, which is $m_\star$.
	
	Recall that $$E^{(\varepsilon)}(m) = \min_{m_\star\leq m\leq m^\star} \left( U\cdot \left(1-\frac{\sum_\ell \Gamma_\ell}{\sum_\ell m_\ell}\right)+ \frac{ Um}{\varepsilon\cdot \sum_{\ell=1}^L \Gamma_\ell}\right).$$ Hence, our task reduces to showing that
	\begin{align*}
	 &\frac{ (m+1)}{\varepsilon\cdot \sum_{\ell=1}^L \min\{m+1,m_\ell\}} - \frac{m}{\varepsilon\cdot \sum_{\ell=1}^L \min\{m,m_\ell\}} \\
	 &\ \ \ \  \ \ \ \ >\frac{\sum_\ell \min\{m+1,m_\ell\}}{\sum_\ell m_\ell} - \frac{\sum_\ell \min\{m,m_\ell\}}{\sum_\ell m_\ell}.
	\end{align*}
	Call the expression on the left-hand side above as $\alpha^{(\varepsilon)}(m)$ and that on the right-hand side above as $\beta(m)$. In what follows, we shall show that $\alpha^{(\varepsilon)}(m)\geq \frac{L}{\sum_\ell m_\ell}>\beta(m)$. To this end, observe that
	\begin{align}
		\beta(m)&= \frac{1}{\sum_\ell m_\ell}\cdot \left(\sum_\ell \min\{m+1,m_\ell\} - \sum_\ell \min\{m,m_\ell\}\right) \notag\\
		& = \frac{|S|}{\sum_\ell m_\ell} < \frac{L}{\sum_\ell m_\ell} \label{eq:beta}
	\end{align}
where $S:= \{\ell: m_\ell>m\}$. Let $\gamma_m:= \sum_\ell \min\{m,m_\ell\}$. Note that from above, $\gamma_{m+1} = \gamma_m+|S|$. Hence, 
\begin{align}
	\alpha^{(\varepsilon)}(m)&=  \frac{\gamma_m(m+1)-m\cdot(\gamma_m+|S|)}{\varepsilon\cdot \gamma_{m+1}\gamma_m} \notag\\
	&= \frac{\sum_{\ell\in S^c} m_\ell}{\varepsilon\cdot \gamma_{m+1} \gamma_m}\geq \frac{m_\star}{\varepsilon\cdot \gamma_{m+1}\gamma_m}, \label{eq:alpha}
\end{align}
where in the second equality above, we have used $S^c:= [L]\setminus S$. Now, since $\gamma_m\leq \sum_\ell m_\ell$, for all $m\in [m_\star,m^\star]$, we obtain that
\begin{align*}
	\alpha^{(\varepsilon)}(m)\geq \frac{m_\star}{\varepsilon\cdot \left(\sum_\ell m_\ell\right)^2}\geq \frac{L}{\sum_\ell m_\ell},
\end{align*}
for $\varepsilon = \frac{m_\star}{L\cdot \sum_\ell m_\ell}$, thereby showing what we set out to prove.

To prove the second statement of the lemma, note that it suffices to show that for $\varepsilon = \left(\frac{\sum_\ell m_\ell}{Lm_\star}\right)^2$, we have that $E^{(\varepsilon)}(m+1)<E(m)$, for all $m\in [m_\star,m^\star-1]$.

To prove this claim, we argue that $\alpha^{(\varepsilon)}(m)\leq \frac{\sum_\ell m_\ell}{\varepsilon(Lm_\star)^2}< \beta(m)$. Crucially, to lower bound $\beta(m)$, we bound $|S|$ (see \eqref{eq:beta}) from below by $1$, and to upper bound $\alpha^{(\varepsilon)}(m)$, we upper bound $\sum_{\ell\in S^c} m_\ell$ (see \eqref{eq:alpha}) by $\sum_\ell m_\ell$, and we lower bound $\gamma_m$ by $Lm_\star$. The proof then follows analogous to the proof of the first statement.
\end{proof}
We present next an immediate corollary of Lemmas \ref{lem:minimax1} and \ref{lem:minimax2}, which intuitively states that for small enough $\varepsilon$, the error due to noise addition ($E_2(m)$) dominates over that due to clipping ($\max_{\mathcal{D}'} E_1^{(\varepsilon)}(m,\mathcal{D}')$), thereby setting the optimal value of $m = m_\text{UB}$ in \eqref{eq:minimax1} to be $m_\star$. Likewise, for large enough $\varepsilon$, the error due to clipping dominates over that due to noise addition, thereby setting the optimal value of $m = m_\text{UB}$ in \eqref{eq:minimax1} to be $m^\star$. Let $\varepsilon_\text{min}:= \frac{m_\star}{L\cdot \sum_\ell m_\ell}$ and let $\varepsilon_\text{max}:= \left(\frac{\sum_\ell m_\ell}{Lm_\star}\right)^2$. Then,

\begin{corollary}
	\label{cor:mepsilon}
	For $\varepsilon \leq \varepsilon_\text{min}$, we have that $m^{(\varepsilon)} = m_\star$; likewise, for $\varepsilon\geq \varepsilon_\text{max}$, we have that $m^{(\varepsilon)} = m^\star$.
\end{corollary}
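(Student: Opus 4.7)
The plan is to observe that Corollary \ref{cor:mepsilon} is essentially a direct consequence of combining Lemma \ref{lem:minimax1} (monotonicity of $m^{(\varepsilon)}$ in $\varepsilon$) with the two boundary identifications from Lemma \ref{lem:minimax2}. No new machinery is needed; one only has to carefully interpret the ``smallest minimizer'' convention inherent in the definition of $m^{(\varepsilon)}$ via \eqref{eq:uboptimnew}, together with the feasibility constraint $m_\star \le m \le m^\star$.

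First, I would recall that by Lemma \ref{lem:minimax2}, $m^{(\varepsilon_\text{min})} = m_\star$ and $m^{(\varepsilon_\text{max})} = m^\star$, where $\varepsilon_\text{min} = \tfrac{m_\star}{L\sum_\ell m_\ell}$ and $\varepsilon_\text{max} = \bigl(\tfrac{\sum_\ell m_\ell}{Lm_\star}\bigr)^2$. Next, I would invoke Lemma \ref{lem:minimax1}, which guarantees that $m^{(\varepsilon)}$ is non-decreasing in $\varepsilon$. Now, for any $\varepsilon \le \varepsilon_\text{min}$, monotonicity yields $m^{(\varepsilon)} \le m^{(\varepsilon_\text{min})} = m_\star$. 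Since the feasibility constraint in \eqref{eq:minimax1} enforces $m^{(\varepsilon)} \ge m_\star$, one concludes $m^{(\varepsilon)} = m_\star$. Symmetrically, for any $\varepsilon \ge \varepsilon_\text{max}$, monotonicity gives $m^{(\varepsilon)} \ge m^{(\varepsilon_\text{max})} = m^\star$, and the feasibility constraint $m^{(\varepsilon)} \le m^\star$ forces equality.

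The only subtlety worth flagging is that the argminimum in \eqref{eq:uboptimnew} could be a set of cardinality larger than $1$, so Lemma \ref{lem:minimax1} should be understood as asserting monotonicity of the particular selector $m^{(\varepsilon)}$ used throughout (indeed, the proof of Lemma \ref{lem:minimax1} establishes the stronger fact that any $m < m^{(\varepsilon_1)}$ is strictly suboptimal at $\varepsilon_2 > \varepsilon_1$, which is what guarantees $m^{(\varepsilon_2)} \ge m^{(\varepsilon_1)}$ regardless of the tie-breaking rule). With this point understood, the corollary is immediate. I do not anticipate any substantive obstacle: the entire proof is a short monotonicity-plus-boundary argument taking at most a few lines, as all the heavy lifting has already been done in the two preceding lemmas.
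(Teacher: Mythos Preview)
Your proposal is correct and matches the paper's approach exactly: the paper presents Corollary \ref{cor:mepsilon} as an immediate consequence of Lemmas \ref{lem:minimax1} and \ref{lem:minimax2}, and your argument (monotonicity from Lemma \ref{lem:minimax1} plus the boundary identifications from Lemma \ref{lem:minimax2}, combined with the feasibility constraint $m_\star \le m \le m^\star$) is precisely the intended derivation.
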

\section{The \textsc{OPT-Array-Averaging} Algorithm}
\label{sec:opt-array-av}
The discussion in the previous section immediately gives rise to an algorithm, which we call \textsc{OPT-Array-Averaging}, that chooses $m_\text{UB}$ in the \textsc{Array-Averaging} algorithm, in order to jointly optimize the \emph{worst-case} errors due to clipping and privacy. In particular, \textsc{OPT-Array-Averaging} sets $m_\text{UB}$, for a fixed $\varepsilon$, to be that value $m^{(\varepsilon)}$ in \eqref{eq:uboptimnew}. 


While Corollary \ref{cor:mepsilon} shows that for small enough $\varepsilon$, \textsc{OPT-Array-Averaging} sets $m_\text{UB} = m_\star$, the question of explicitly solving for $m^{(\varepsilon)}$ has not been addressed yet. We show next that the function $E_2^{(\varepsilon)}(m)$ in \eqref{eq:e2} is unfortunately non-convex in $m$, for most $\{m_\ell\}$ values of interest, for a fixed $\varepsilon$.  Equivalently, we shall show that the function
\[
\eta(m):= \frac{m}{\sum_{\ell=1}^L \min\{m_\ell, m\}},
\]
is non-convex in $m$, for general datasets. Furthermore, observe from Lemma \ref{lem:e1} that $\max_{\mathcal{D}'} E_1^{(\varepsilon)}(m,\mathcal{D}')$ is \emph{convex} in $m$, for all integer values of $m$. Thus, we have that $E(m)$ as in \eqref{eq:inter2} is non-convex in $m$ in general, and hence minimizing $E(m)$ over $m$ analytically is hard; one therefore has to resort to numerical methods.

We now prove that $\eta(m)$ is non-convex in $m$, for most datasets. Indeed, consider the setting where there exists some $k\in [L]$ such that $m_k-m_{k+1}>2$ (recall that $m_1\geq\ldots\geq m_L$). In what follows, we show that  $\eta(m)$ is in fact \emph{concave}, for $m\in (m_{k+1}, m_k)$, when the argument $m$ is treated as a real number. In particular, this then implies that $\eta(m)$ is non-convex when $m$ takes an integer value in the range $(m_{k+1}, m_k)$.
\begin{lemma}
	We have that $\eta(m)$ is concave in $m$, for $m\in (m_{k+1}, m_k)$.
\end{lemma}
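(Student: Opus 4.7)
The plan is to show concavity directly by computing the second derivative of $\eta$, after simplifying the sum $\sum_{\ell=1}^L \min\{m_\ell, m\}$ on the open interval $(m_{k+1}, m_k)$.

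First I would observe that on $m \in (m_{k+1}, m_k)$, the quantities $\min\{m_\ell, m\}$ split cleanly according to whether $\ell \leq k$ or $\ell > k$. Since the users are indexed so that $m_1 \geq m_2 \geq \cdots \geq m_L$, for every $\ell \leq k$ we have $m_\ell \geq m_k > m$, so $\min\{m_\ell, m\} = m$; and for every $\ell > k$ we have $m_\ell \leq m_{k+1} < m$, so $\min\{m_\ell, m\} = m_\ell$. Setting $S_k := \sum_{\ell > k} m_\ell$, this gives the clean closed form
\[
\eta(m) = \frac{m}{\,k\,m + S_k\,}, \qquad m \in (m_{k+1}, m_k).
\]

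Next I would differentiate this rational function twice. A short calculation yields
\[
\eta'(m) = \frac{S_k}{(k m + S_k)^2}, \qquad \eta''(m) = -\frac{2\,k\,S_k}{(k m + S_k)^3}.
\]
Since $k \geq 1$ (the hypothesis $m_k - m_{k+1} > 2$ forces $k \in \{1, \ldots, L-1\}$, in particular $k \geq 1$) and $S_k \geq m_{k+1} > 0$, and since the denominator $(k m + S_k)^3$ is positive for $m > 0$, we conclude $\eta''(m) < 0$ throughout the open interval. Hence $\eta$ is concave on $(m_{k+1}, m_k)$, as claimed.

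I do not anticipate any real obstacle here: once one writes out $\sum_\ell \min\{m_\ell, m\}$ as the piecewise-affine function $km + S_k$ on the interval in question, the result reduces to the elementary fact that $m \mapsto m/(am + b)$ is concave on $(0, \infty)$ whenever $a, b > 0$. The only subtle point worth mentioning explicitly in the writeup is why $k \geq 1$ and $S_k > 0$, which both follow from the standing assumption that the interval $(m_{k+1}, m_k)$ is nonempty (in fact has length greater than $2$) inside $[m_\star, m^\star]$.
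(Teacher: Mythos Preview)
Your proposal is correct and follows essentially the same approach as the paper: on the interval $(m_{k+1},m_k)$ you rewrite $\eta(m)=m/(km+S_k)$ with $S_k=\sum_{\ell>k}m_\ell$ (the paper calls this constant $c$), compute the second derivative $\eta''(m)=-2kS_k/(km+S_k)^3$, and conclude concavity from $k,S_k>0$. The only addition is that you spell out why $k\geq 1$ and $S_k>0$, which the paper leaves implicit.
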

\begin{proof}
	Observe that for $m\in (m_{k+1}, m_k)$, we have that
	\[
	\eta(m) = \frac{m}{c + km},
	\]
	where $c:=\sum_{\ell=k+1}^L m_\ell$. For this range of $m$ values, hence, $\frac{\d^2 \eta}{\d m^2} = -2ck\cdot (c+km)^{-3}$. Since $c,k>0$, we obtain that $\frac{\d^2 \eta}{\d m^2}<0$, implying the concavity of $\eta$ for the given range of $m$ values.
\end{proof}

Given the potential difficulty of solving for $m^{(\varepsilon)}$ in practice, we next present a simpler choice of $m_\text{UB}$ for \textsc{OPT-Array-Averaging}, which results from solving a simpler optimization problem. This new optimization problem  replaces $E_2^{(\varepsilon)}(m)$ by the ``worst-case inverse gain" ${\tilde{\Delta}_{f_\text{arr}}}/{\Delta_f}$ (see \eqref{eq:sens}), which we call $\overline{E_2}^{(\varepsilon)}(m)$, where
\[
\overline{E_2}^{(\varepsilon)}(m):= \max\left\{ \frac{m}{m^\star}, \frac{\overline{m}}{m^\star}\right\},
\]
where $\overline{m}:= \frac{\sum_\ell m_\ell}{L}$. Recall that the worst-case inverse gain is a proxy for the error incurred due to the privacy requirement in \textsc{Array-Averaging}, in comparison with the error incurred by \textsc{Baseline} (which in turn is independent of $m$).

 In our new optimization problem, we minimize the sum of the worst-case (over datasets) error $E_1/U$ and the worst-case inverse gain, over admissible values $m$. More precisely, our optimization problem is as follows
 \begin{align}
 	&\text{minimize}\ \overline{E}(m):=  1-\frac{\sum_\ell \Gamma_\ell}{\sum_\ell m_\ell}+ \max\left\{ \frac{m}{m^\star}, \frac{\overline{m}}{m^\star}\right\}\notag\\
 	&\text{subject to}\ m_\star\leq m\leq m^\star.
 	\label{eq:inter3}
 \end{align}
It can easily be argued that $\overline{E}(m)$ is convex in $m$ and hence \eqref{eq:inter3} is a convex optimization problem with linear constraints. 
 We first state a simple lemma, which characterizes the stationary points of a subderivative of $\overline{E}$. Let $q := \frac{\sum_\ell m_\ell}{m^\star}$ and recall that $m_1\geq\ldots \geq m_L$. 
 \begin{lemma}
 	\label{lem:derivative}
 	We have that $\frac{\d \overline{E}}{\d m} \bigg \rvert_{m=\tilde{m}} = 0$, iff the following conditions hold:
 	\begin{enumerate}
 		\item $q$ is an integer,
 		\item $m_q\geq \overline{m}$, and
 		\item $\tilde{m} = m_q$.
 	\end{enumerate}
 \end{lemma}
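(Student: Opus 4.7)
The plan is to exploit the piecewise-linearity of $\overline{E}$: compute its slope on each smooth piece, then read off when $0$ can lie in its (sub)gradient. Convexity of $\overline{E}$ (a sum of convex pieces) is already implicit in the preceding discussion, and I will use it to handle the kinks uniformly.

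Since $m_1\geq\ldots\geq m_L$, on the open interval $(m_{k+1},m_k)$ (with the conventions $m_0=+\infty$, $m_{L+1}=0$) we have $\sum_{\ell=1}^L \Gamma_\ell = \sum_{\ell>k} m_\ell + km$, so the first term of $\overline{E}$ contributes slope $-k/\sum_\ell m_\ell$. The term $\max\{m,\overline{m}\}/m^\star$ contributes slope $1/m^\star$ for $m>\overline{m}$ and $0$ for $m<\overline{m}$. Thus, on the interior of a linear piece,
\[
\frac{d\overline{E}}{dm}(m) \;=\; -\frac{k}{\sum_\ell m_\ell} \;+\; \frac{\mathbf{1}\{m>\overline{m}\}}{m^\star}.
\]
Setting this to zero, the branch $m<\overline{m}$ forces $k=0$, but then $m>m_1=m^\star\geq \overline{m}$, a contradiction. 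So necessarily $m>\overline{m}$ and $k=\sum_\ell m_\ell/m^\star = q$. Since $k$ is a positive integer in $[L]$, this forces $q$ to be a positive integer, which is condition (1).

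Suppose (1) holds. Then the slope is $0$ on the piece $(m_{q+1},m_q)$ whenever this piece lies to the right of $\overline{m}$, and just to the right of $m_q$ the slope jumps up to $1/\sum_\ell m_\ell>0$ (the count $k$ drops from $q$ to $q-1$). Hence $m_q$ is the right endpoint of the flat segment, matching the claim $\tilde{m}=m_q$. For such a flat segment to contain $m_q$, the piece $(m_{q+1},m_q)$ must meet $\{m>\overline{m}\}$, equivalently $m_q\geq \overline{m}$, which is condition (2). The boundary case $m_q=\overline{m}$ is handled by the convex subgradient: the left slope at $m_q$ is $-q/\sum_\ell m_\ell$ and the right slope is $-(q-1)/\sum_\ell m_\ell+1/m^\star = 1/\sum_\ell m_\ell$, so $0\in\partial\overline{E}(m_q)$. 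Conversely, if any of (1)--(3) fails, one checks via the same slope formula that the left and right limits of $d\overline{E}/dm$ at $\tilde{m}$ strictly bracket or exclude $0$.

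The main subtlety is handling the two distinct types of kinks (at the $m_\ell$'s and at $\overline{m}$), which may coincide; I expect to resolve this uniformly by passing to the convex subgradient rather than working with classical one-sided derivatives on a case-by-case basis.
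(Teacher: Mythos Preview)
Your approach is essentially the same as the paper's: both compute the (sub)derivative of $\overline{E}$ piecewise---the paper writes it compactly as $\frac{\d\overline{E}}{\d m}=-\frac{|\{\ell:m_\ell\ge m\}|}{\sum_\ell m_\ell}+\frac{1}{m^\star}\mathds{1}\{m\ge\overline{m}\}$ and sets it to zero, while you arrive at the identical expression by tracking the slope on each interval $(m_{k+1},m_k)$---and then read off that $k=q$ must be an integer with $m_q\ge\overline{m}$. Your treatment of the kinks via the convex subgradient is more explicit than the paper's, which simply asserts the subderivative formula and concludes; otherwise the arguments coincide.
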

\begin{proof}
	First, we take the subderivative of $\overline{E}$ (see also \cite{amin}), to obtain
	\begin{align*}
		\frac{\d \overline{E}}{\d m}= \frac{-|\{\ell: m_\ell\geq m\}|}{\sum_\ell m_\ell}+\frac{1}{m^\star}\cdot \mathds{1}\left\{m\geq \overline{m}\right\}.
	\end{align*}
	Hence, we have that $\frac{\d \overline{E}}{\d m} \bigg \rvert_{m=\tilde{m}} = 0$ only if
	$
	|\{\ell: m_\ell\geq m\}| = \frac{\sum_\ell m_\ell}{m^\star};
	$
	in other words, $\tilde{m}$ is a stationary point if and only if the conditions stated in the lemma hold.
\end{proof}
There is hence a simple procedure, based on the necessity of the KKT conditions (see \cite[Sec. 5.5.3]{boyd}), for obtaining the minimizer in \eqref{eq:inter3}, which we call $\overline{m}^{(\varepsilon)}$.
\begin{enumerate}
	\item[(i)] If $q$ and $m_q$ obey the conditions in Lemma \ref{lem:derivative}, set $\overline{m}^{(\varepsilon)} = m_q$.
	\item[(ii)] Else, we have that either $\overline{m}^{(\varepsilon)} = m_\star$ or $\overline{m}^{(\varepsilon)}= m^\star$ (i.e., $\overline{m}^{(\varepsilon)}$ is a boundary point). Pick $\overline{m}^{(\varepsilon)} \in \arg \min\limits_{m\in \{m_\star,m^\star\}} \overline{E}(m)$.
\end{enumerate}
\begin{remark}
	We have that $\overline{E}(m^\star) = 1$ and 
	$
	\overline{E}(m_\star) = \left(1-\frac{\sum_\ell \Gamma_\ell}{\sum_\ell m_\ell}\right)+ \frac{\overline{m}}{m^\star}.
	$
	Hence, without additional information on the distribution of $\{m_\ell\}_{\ell\geq 1}$, it is not possible to comment on which of $\overline{E}(m_\star)$ or $\overline{E}(m^\star)$ is smaller, in Step (ii) above.
\end{remark}
\section{Conclusion}
\label{sec:conclusion}
In this paper, we proposed algorithms  for the private release of sample means of real-world datasets, with particular focus on traffic datasets that contain bus speed samples. {The speed samples in the datasets in consideration are potentially non-i.i.d. with an unknown distribution, and the number of speed samples contributed by different buses are, in general, different.} We analyzed the performance of the different algorithms proposed, via extensive experiments on real-world ITMS datasets and on large synthetic datasets. We then provided theoretical justification for the choices of subroutines used, and recommended subroutines to be used for large datasets. 

Finally, we presented a ``minimax'' analysis of the total estimation error due to clipping and due to noise addition for privacy, in the general setting of pseudo-user creation-based algorithms that clip the number of samples per user, and discussed some interesting consequences. In particular, we obtained an upper bound on the total error incurred by the ``best'' pseudo-user creation-based algorithm on \emph{any} dataset. We then presented a novel procedure, based on the creation of pseudo-users, which clips the number of samples contributed by a user in such a manner as to optimize the total \emph{worst-case error}. 

{These algorithms are readily applicable to general spatio-temporal datasets for releasing a differentially private mean of a desired value.} An interesting line of future research would be to extend the results in this work to the private release of sample means from several distinct location grids simultaneously, in such a manner as to reduce the privacy loss due to the composition of several mechanisms (see, e.g., \cite{kairouz,steinkechapter}).

\section*{Acknowledgment}
The authors thank Prof. Himanshu Tyagi for helpful discussions.




%
\bibliographystyle{IEEEtran}
{\footnotesize
	\bibliography{references}}

\end{document}